\documentclass[11pt]{article}
\usepackage{verbatim,url,enumerate,color,paralist}
\usepackage{amsmath,amsfonts,amssymb,amstext,amsthm}
\usepackage{fullpage}
\usepackage{hyperref}
\usepackage[margin=2cm]{geometry}

\newtheorem{theorem}{Theorem}
\newtheorem{lemma}[theorem]{Lemma}
\newtheorem{corollary}[theorem]{Corollary}

\newtheorem{claim}[theorem]{Claim}

\newtheorem{remark}[theorem]{Remark}

\newcommand{\R}{\ensuremath{\mathbb{R}}}
\newcommand{\supp}{\mbox{supp}}
\begin{document}

\title{A Rational Convex Program for Linear Arrow-Debreu Markets}
\author{Nikhil R. Devanur\thanks{  {Microsoft Research, Redmond, USA.}
     { (nikdev@microsoft.com)}}
 \and  Jugal Garg\thanks{  {College of Computing, Georgia Institute of
   Technology, Atlanta, USA.}
     {  (jgarg@cc.gatech.edu)}}
  \and L\'aszl\'o A. V\'egh\thanks{
       {Department of Management,}
       {London School of Economics, London, UK.}
      {(l.vegh@lse.ac.uk)}}
}
\date{}
\maketitle
\begin{abstract}
We give a new, flow-type convex program describing equilibrium solutions to linear Arrow-Debreu markets. 
Whereas convex formulations were previously known (\cite{nenakov83,jain,cornet89}), our program exhibits several new features.
It gives a simple necessary and sufficient condition and a concise
proof of the existence and rationality of equilibria, settling an open question raised by Vazirani \cite{vazirani12}.
As  a consequence we also obtain a simple new proof of Mertens's \cite{mertens03} result that the equilibrium prices form a convex polyhedral set.
\end{abstract}

\section{Introduction}

The {\em exchange market model} is a classical model of a market along with a notion of equilibrium,
 introduced by Walras in 1874 \cite{walras}. In this model, agents arrive at the market with an initial endowment of
divisible goods, and a  utility function for consuming goods.
A market equilibrium assigns prices to the goods such that when every agent uses the revenue from selling
her initial endowment for  purchasing a bundle of goods that maximizes
her utility, the market {\em clears}, i.e, the total demand for every good is equal to its supply. 
The celebrated theorem by Arrow and Debreu \cite{arrow}
proves the existence of a market equilibrium under mild necessary conditions on the 
utility functions - therefore it is commonly known as the Arrow-Debreu
market model. Since then, understanding equilibrium behavior and
computing equilibrium prices has been extensively studied 
in mathematical economics and more recently in theoretical computer
science.

In this paper we study the {\em linear Arrow-Debreu model}, where the utility functions of agents are {\em linear}.
Let us first mention results
pertaining to a well-studied further special case,
the {\em linear
  Fisher model}, that was formulated by Fisher in 1891, who also studied
 the computability of equilibrium,  via a hydraulic
machine no less! (See Brainard and Scarf \cite{brainard} for a fascinating account.) In this model, the
agents are separated into two types, buyers and sellers; buyers arrive
to the market with a certain amount of money they wish to spend on
goods offered by the sellers. 
This model turned out to be
substantially easier from a computational perspective  than the linear Arrow-Debreu model.
A convex programming formulation was given by Eisenberg and Gale in
1959 \cite{eisenberg59}. The problem of equilibrium computation was introduced
to the theoretical computer science community by Devanur et
al. \cite{devanur08}, who gave a polynomial time combinatorial
primal-dual algorithm. This initiated an intensive line of research, most notable among which is a 
strongly polynomial time algorithm by Orlin \cite{orlin}; for a survey, see
\cite[Chapter 5]{nisan07} or \cite{vazirani12}.
Also, Shmyrev \cite{Shmyrev09} gave a new type of convex
program (which was discovered independently by  Birnbaum et al. \cite{BDX}) capturing the equilibria. 

Let us now turn to the linear Arrow-Debreu model. The first important
algorithmic result was a finite Lemke-type path following algorithm for finding
an equilibrium solution by Eaves \cite{eaves}.  A remarkable
consequence of this algorithm is that  when the utilities are given by rational numbers, 
there also exists an equilibrium among rational numbers. 

The history of convex programming formulations for the linear Arrow-Debreu model is somewhat convoluted.
Jain \cite{jain}  formulated a convex
program whose solutions correspond to market equilibria;
this can be used to obtain a polynomial time algorithm via the
Ellipsoid algorithm. It turned out later that the same convex program
was already formulated by Nenakov and Primak in 1983 \cite{nenakov83}.
Interestingly, the computer science community so far seems to have been unaware of the 1989 paper 
 by Cornet in \cite{cornet89} giving a similar, yet better convex program.
This is not mentioned even in the survey paper by Codenotti et al. \cite{codenotti04} exploring the background of the problem.
 (These convex programs will be exhibited in Section~\ref{sec:general}.)

An unsatisfactory aspect of the program in \cite{jain,nenakov83}
is that it fails to show the existence of an equilibrium; 
it only shows that {\em if} there exists an equilibrium, then any feasible solution to the convex program is one. In contrast, 
Cornet's program provides a proof of existence assuming that a stronger sufficient condition given by Gale in 1957
\cite{gale57} holds, however fails to show it for the weaker necessary and sufficient condition given by Gale in 1976
\cite{gale}. 
An efficient interior point algorithm to compute an equilibrium was given by Ye \cite{Ye08} based on the convex program in \cite{jain,nenakov83}.
An important recent result is a combinatorial primal-dual algorithm by Duan and Mehlhorn \cite{duan13}; this does not rely on the convex programming formulation but adapts techniques from the algorithm by Devanur et al. \cite{devanur08} for linear Fisher markets.

The convex programs \cite{nenakov83,jain} and \cite{cornet89} for the linear Arrow-Debreu model
is of substantially different nature from those \cite{eisenberg59,Shmyrev09,BDX} for the linear Fisher
model. The latter ones have linear constraints only, with separable
convex objectives, in contrast to the nonlinear constraints in
\cite{nenakov83,jain}. Whereas \cite{cornet89} is formulated with only very simple linear constraints,
the max-min type objective in fact hides similar nonlinear constraints.

The feasible region for both formulations for the linear Fisher model are indeed classical polyhedra,
\cite{eisenberg59} a generalized flow polyhedron and \cite{Shmyrev09,BDX}
a circulation polyhedron. This also explains why for the linear Fisher model, classical flow
techniques are applicable (see  \cite{vegh12a,vegh12b}) and strongly polynomial time algorithms exist. 
Also interestingly, the convex
programs of \cite{eisenberg59} and \cite{Shmyrev09,BDX} fall into the class
of {\em rational convex programs}, defined by Vazirani
\cite{vazirani12}: for a rational input, there exists a rational
optimal solution with bitsize bounded polynomially in the input size.
For the previous programs \cite{cornet89,nenakov83,jain} the proof of
the existence of a rational optimal solution requires further nontrivial arguments (e.g. \cite{eaves,duan13}).
 An open problem in
\cite{vazirani12} asks for the existence of a rational convex program
for the linear Arrow-Debreu model with a simple, direct proof of rationality.

In this paper, {\em we exhibit a rational convex program for the linear Arrow-Debreu model,
that also guarantees the existence of an equilibrium}, 
thus settling the open questions of \cite{vazirani12}. 
Our convex program draws from the convex programs in \cite{eisenberg59,Shmyrev09,BDX};  
more precisely, it is a combination of the convex program of \cite{Shmyrev09,BDX} and the dual of the convex program of \cite{eisenberg59} (see Devanur \cite{devanur-convex}).
The objective function has terms from both the convex programs and 
there are two sets of constraints, one describing a circulation polyhedron as in \cite{Shmyrev09,BDX}, 
and another that is similar to the dual of the Eisenberg-Gale \cite{eisenberg59} convex program. 
The main technical contribution is to show the existence of an
equilibrium based on the Karush-Kuhn-Tucker (KKT) conditions for this
convex program.
Our program is feasible if and only if Gale's \cite{gale} necessary and sufficient
conditions on the existence of  equilibria hold. The
existence of a rational optimal solution for rational input follows by showing that there exists an optimal solution that is an extreme point of the feasible region.

\medskip

Now we give a formal description of the model and give our convex program.
We are given set $A$ of $n$ agents, and
 assume that there is a one-to-one mapping between agents and goods,
every agent $i\in A$ arrives with one divisible unit of good of type $i$.
This is without loss of generality:
the general case with an arbitrary set of goods and arbitrary initial
endowments can be easily
 reduced to this setting; see Section~\ref{sec:general}.
  The utility of agent $i$ for the
 good of agent $j$ is $u_{ij}\ge 0$. 
The directed graph $(A,E)$ contains an arc $ij$ for every pair with
$u_{ij}>0$; it may also contain loops expressing that some agents are
interested in their own goods. 
We make the standard assumption that for each agent $i\in A$, $E$ contains at least one incoming and one outgoing arc incident to $i$.
 
By a {\em market equilibrium}, we mean a set of prices $p:A\rightarrow \R_+$
and allocations $x:E\rightarrow \R_+$ satisfying the following conditions.
\begin{itemize}
\item {\em Market clearing:} Demand equals supply.
\begin{itemize}
\item $\sum_{i\in A} x_{ij}=1$, for every $j\in A$, i.e., every good is fully sold.
\item $p_i=\sum_{j\in A}x_{ij}p_j$  for every $i\in A$, i.e., the money spent by
  agent $i$ equals his income $p_i$.
\end{itemize}
\item {\em Optimal bundle:} Every agent is allocated a utility maximizing bundle
subject to its budget constraint. That reduces to
\begin{itemize}
\item For every $i\in A$, if  $x_{ij}>0$ then $u_{ij}/p_j$ is the maximal
value over $j\in A$.
\item $p_i>0$ for every $i\in A$;
\end{itemize} 
\end{itemize}

It is easy to see that the following condition is necessary for the existence of an equilibrium:
\begin{equation}\tag{$\star$}\label{cond:suff}
\begin{aligned}
\mbox{For every strongly connected component $S\subseteq E$ of the digraph $(A,E)$,}\notag\\
\mbox{if $|S|=1$ then there is a loop incident to the node in $S$.}
\end{aligned}
\end{equation}
Indeed, assume $\{k\}$ is a singleton strongly connected component
without a loop. Let $T$ denote the set of nodes different from $k$ that can be reached on
a directed path in $E$ from $k$. In an equilibrium
allocation, the agents in $T\cup \{k\}$ spend all their money on the
goods of the agents in $T$; this implies $p_k=0$, contrary to our assumption.

We formulate the following convex program, with variables $p_i$ representing the prices, the $\beta_i$'s the inverse best bang-per-bucks,
and $y_{ij}$ the money paid by agent $i$ to agent $j$.
\begin{equation}\tag{CP}\label{CP}
\begin{aligned}
\min \sum_{i\in A} p_i\log\frac{p_i}{\beta_i}&-\sum_{ij\in E}y_{ij}\log u_{ij}\notag\\
\sum_{i:ij\in E}y_{ij}&=p_j \quad \forall j\in A\notag\\
\sum_{j:ij\in E}y_{ij}&=p_i \quad \forall i\in A\\
u_{ij}\beta_i&\le p_j\quad \forall ij\in E\notag\\
p_i&\ge 1\quad\forall i\in A\notag\\
y,\beta&\ge 0\notag
\end{aligned}
\end{equation}

\begin{theorem}\label{thm:main}
Consider an instance of the linear Arrow-Debreu market given by the graph
$(A,E)$ and the utilities $u:E\rightarrow\R_+$. 
The convex program (\ref{CP}) is feasible if and only if
(\ref{cond:suff}) holds, and in this case the optimum value is 0,
and the prices $p_i$ in an optimal solution give a market equilibrium with allocations $x_{ij}=y_{ij}/p_j$.
Further,  if all utilities are rational numbers, then there exists a market
 equilibrium with all prices and allocations also rational, of bitsize
 polynomially bounded in the input size.
\end{theorem}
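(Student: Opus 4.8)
The plan is to prove the statement in four stages: (i) characterise exactly when (\ref{CP}) is feasible; (ii) pin down the optimal value by matching a cheap lower bound with a harder upper bound coming from the KKT conditions; (iii) read a market equilibrium off any optimal solution; and (iv) observe that some optimal solution is a vertex of the (polyhedral) feasible region, giving rationality. For stage (i) I would show (\ref{CP}) is feasible iff (\ref{cond:suff}) holds. For ``$\Rightarrow$'' I extend the observation made before the theorem: if $\{k\}$ is a singleton strongly connected component with no loop and $T$ is the set of vertices $\ne k$ reachable from $k$, then every arc leaving $T\cup\{k\}$ ends in $T$ and no arc enters $k$ from $T\cup\{k\}$, so summing $\sum_{i:ij\in E}y_{ij}=p_j$ over $j\in T$, summing $\sum_{j:ij\in E}y_{ij}=p_i$ over $i\in T\cup\{k\}$, and comparing the two totals forces $p_k\le 0$, contradicting $p_k\ge 1$. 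For ``$\Leftarrow$'', assuming (\ref{cond:suff}), in each strongly connected component I take a family of directed cycles covering all its vertices (a single loop when the component is a singleton); the sum of their arc-indicator vectors is a nonnegative integral circulation $y$ with positive integer throughput $p_i$ at every vertex, and setting $\beta_i:=\min_{j:ij\in E}p_j/u_{ij}>0$ completes a feasible point with finite objective.

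For stage (ii) the lower bound is immediate: for any feasible point the constraint $u_{ij}\beta_i\le p_j$ gives $\log u_{ij}\le\log(p_j/\beta_i)$, so multiplying by $y_{ij}\ge 0$, summing over $E$, and telescoping with the two flow constraints yields $\sum_{ij\in E}y_{ij}\log u_{ij}\le\sum_{j}p_j\log p_j-\sum_i p_i\log\beta_i$; hence the objective is $\ge 0$, with equality precisely when $u_{ij}\beta_i=p_j$ for every $ij$ with $y_{ij}>0$. For the matching upper bound I would first show that the minimum is attained (the feasible set is unbounded, but the objective is bounded below, as just shown, and is homogeneous of degree one along the feasible rays $(p,\beta,y)\mapsto(tp,t\beta,ty)$, which is the starting point for ruling out escape to infinity), and then apply the KKT conditions at an optimum. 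Since raising $\beta_i$ only decreases the objective, at an optimum $\beta_i=\min_{j:ij\in E}p_j/u_{ij}>0$; the substance is then to show, from stationarity and complementary slackness, that the multipliers of the constraints $u_{ij}\beta_i\le p_j$ behave like the quantities $y_{ij}/p_j$ and that the multipliers of the constraints $p_i\ge 1$ all vanish, forcing $u_{ij}\beta_i=p_j$ whenever $y_{ij}>0$; by the equality characterisation above the optimal value is then $0$.

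For stage (iii), given any optimal $(p,\beta,y)$, put $x_{ij}:=y_{ij}/p_j$ (well defined since $p_j\ge 1$): dividing the first flow constraint by $p_j$ gives $\sum_i x_{ij}=1$, the second flow constraint reads $p_i=\sum_j x_{ij}p_j$, and $p_i\ge 1>0$; moreover $x_{ij}>0$ implies $y_{ij}>0$, hence $u_{ij}\beta_i=p_j$, while $u_{ij'}\beta_i\le p_{j'}$ for all $j'$ shows $u_{ij}/p_j=1/\beta_i$ is the maximum bang-per-buck, so $(p,x)$ is a market equilibrium. For stage (iv), the feasible region $P$ of (\ref{CP}) is a pointed rational polyhedron (all constraints are linear and $p\ge 1$, $y,\beta\ge 0$). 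Taking an optimal $(p^\ast,\beta^\ast,y^\ast)$ and $F:=\{ij:y^\ast_{ij}>0\}$, the set $P_F$ of points of $P$ with $y_{ij}=0$ for $ij\notin F$ and $u_{ij}\beta_i=p_j$ for $ij\in F$ is a nonempty face of $P$ (it is obtained by tightening some of its defining inequalities), and every agent has an out-arc in $F$ because $\sum_j y^\ast_{ij}=p^\ast_i\ge 1$, so on $P_F$ one has $\beta>0$ and the objective is identically $0$; a vertex of $P_F$ is a vertex of $P$, hence rational with bitsize polynomial in the input, is optimal, and delivers the desired rational equilibrium via $x=y/p$.

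I expect the main obstacle to be the upper-bound half of stage (ii): proving that (\ref{CP}) attains its optimum over the unbounded feasible set and, above all, extracting from the KKT system the equalities $u_{ij}\beta_i=p_j$ on the support of $y$ (equivalently, that the multipliers of the constraints $p_i\ge 1$ vanish). This is exactly the point at which the existence of an equilibrium is being established, so it is where the argument must do real work rather than bookkeeping.
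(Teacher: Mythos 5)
Your stages (i), (iii), and (iv) are sound and closely parallel the paper. Your stage (i) ``$\Rightarrow$'' uses a flow-balance argument in place of the paper's cycle decomposition of the circulation, but both work. Your stage (iv) is in fact a slightly cleaner packaging than the paper's: you pass to the face $P_F$ obtained by fixing the support of $y$ and tightening $u_{ij}\beta_i\le p_j$ on that support, observe it is a nonempty pointed face on which the objective is identically zero, and take a vertex; the paper instead decomposes $z$ into extreme points and rays and argues any extreme point in the combination must already be optimal. Both give rationality with polynomial bitsize.

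The genuine gap is the one you flagged yourself: the ``upper-bound half of stage (ii).'' You correctly identify that from the KKT system one must show that the multipliers $\tau_i$ of the constraints $p_i\ge 1$ all vanish (equivalently, that $u_{ij}\beta_i=p_j$ whenever $y_{ij}>0$, hence optimum value $0$), but you do not supply an argument, and your sketch of the expected mechanism --- ``the multipliers $w_{ij}$ of $u_{ij}\beta_i\le p_j$ behave like $y_{ij}/p_j$'' --- is not what a direct reading of stationarity gives and is not how the paper proceeds. The paper's actual argument is nontrivial: it first shows $\sum_i p_i\log(p_i/\beta_i)-\sum y_{ij}\log u_{ij}=\sum_i\tau_i$ by pairing complementary slackness on $y$ with stationarity in $p$, then proves $\tau\equiv 0$ by induction on $|A|$. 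The inductive step introduces $q_i=e^{\delta_i}$, $\theta_i=e^{\gamma_i}$ (which play the roles of ``dual prices'' and ``dual inverse bang-per-bucks''), sets $\alpha=\max_i q_i/p_i$ and $S=\arg\max$, and uses the best-bang-per-buck structure of both the primal arcs $F=\{ij:u_{ij}\beta_i=p_j\}$ and the dual arcs $H=\{ij:u_{ij}\theta_i=q_j\}$ to show that $S$ equals the set $T$ of agents all of whose $F$-arcs land in $S$, that no $\supp(y)\cup\supp(w)$ arc crosses $S$, and that $\tau_j=0$ on $S$; the problem then splits and induction applies to $A\setminus S$. None of this is foreseeable from your sketch, so as written the proposal establishes neither that the optimum is $0$ nor (hence) existence of an equilibrium. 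The attainment of the minimum over the unbounded feasible region is also left open in your proposal (scaling rays only go outward from $p\ge 1$ and increase the objective, so they do not by themselves rule out escape to infinity in other directions), though the paper is equally silent on this point.
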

Here, the bitsize of the rational number $p/q$ is defined as $\lceil
\log_2 p\rceil+\lceil \log_2 q\rceil$. 
The rational optimum property follows by observing that there exists an optimal extremal point solution. 
The following results easily follow from the above theorem:
\begin{corollary}
The following hold for linear Arrow-Debreu markets.
\begin{enumerate}[(i)]
\item For every agent the utility is the same at every equilibrium.
\item The vectors $(y,p)$ at equilibrium form a convex set. In particular, the set of price vectors at equilibrium is convex.
\end{enumerate}
\end{corollary}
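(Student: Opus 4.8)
I would deduce both parts of the corollary from Theorem~\ref{thm:main} by first recording the exact dictionary between market equilibria and optimal solutions of (\ref{CP}) that the theorem provides. In one direction the theorem is explicit: from any optimal $(p,\beta,y)$, the prices $p$ together with $x_{ij}=y_{ij}/p_j$ are an equilibrium. For the converse I would start from an arbitrary equilibrium $(p,x)$, scale all prices so that $\min_i p_i\ge 1$ (the equilibrium conditions are invariant under a common scaling of prices, and $y=xp$ scales along), and set $\beta_i:=\min_{j:ij\in E}p_j/u_{ij}$ and $y_{ij}:=x_{ij}p_j$. This is feasible for (\ref{CP}), and using the closed-form expression $\sum_{ij\in E}y_{ij}\log\frac{p_j}{\beta_i u_{ij}}$ for the objective (established in the proof of Theorem~\ref{thm:main}) together with the fact that a buyer spends only on best bang-per-buck goods --- so $p_j=\beta_i u_{ij}$ whenever $y_{ij}>0$ --- the objective is $0$, hence optimal. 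So the optimal set $\mathcal O$ of (\ref{CP}) maps onto the set of equilibria, and (since equilibria are scale-closed while every optimal solution has $\min_i p_i\ge 1$) the set of equilibrium vectors $(y,p)$ is exactly $\{\,t\cdot(y,p):t>0,\ (p,\beta,y)\in\mathcal O\,\}$.

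\textbf{Part (ii).} The optimal set of any convex program is convex, being the intersection $\{z:\text{feasible}\}\cap\{z:f(z)\le 0\}$; hence $\mathcal O$ is convex, and so is its projection $\pi(\mathcal O)$ to the $(y,p)$-coordinates. By the dictionary above, the set of equilibrium vectors $(y,p)$ equals $\operatorname{cone}(\pi(\mathcal O))\setminus\{0\}$, which is convex (the cone over a convex set is convex, and $0$ is an extreme point of that cone, since every point of $\pi(\mathcal O)$ has all $p$-coordinates at least $1$, so removing it preserves convexity). The same argument restricted to the $p$-coordinates gives convexity of the set of equilibrium price vectors.

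\textbf{Part (i).} At an equilibrium the utility of agent $i$ is $\sum_j x_{ij}u_{ij}$; since $x_{ij}>0$ forces $u_{ij}/p_j=1/\beta_i$, this equals $\frac1{\beta_i}\sum_j x_{ij}p_j=p_i/\beta_i$ by the budget identity, and $p_i/\beta_i$ is unchanged under scaling prices, so it suffices to show $p_i/\beta_i$ is the same at every point of $\mathcal O$. Here I would exploit that $f=\sum_i p_i\log\frac{p_i}{\beta_i}-\sum_{ij}y_{ij}\log u_{ij}$ is \emph{identically zero} on the convex set $\mathcal O$ while its second sum is linear; hence $\sum_i p_i\log\frac{p_i}{\beta_i}$ is affine on $\mathcal O$, and since it is a sum of convex functions each of them must already be affine on $\mathcal O$ (the others being convex, the remaining one is both convex and concave). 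Finally I would note that $(p,\beta)\mapsto p\log(p/\beta)$ has positive semidefinite Hessian of rank one, with kernel the radial direction $(p,\beta)$, so it is affine along a segment only when that segment lies on a ray through the origin; therefore the endpoints of any segment inside $\mathcal O$ have equal ratio $p_i/\beta_i$, which is the claim.

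\textbf{Expected difficulty.} I do not expect a genuine obstacle, since all the substance is in Theorem~\ref{thm:main}. The two points needing a little care are making the equilibria~$\leftrightarrow$~$\mathcal O$ dictionary precise despite the inhomogeneous normalization $p_i\ge 1$ in (\ref{CP}) (handled by the rescaling in the first paragraph) and the rank-one Hessian observation underlying part (i).
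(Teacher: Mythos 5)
Your proposal is correct, and since the paper states the corollary without spelling out a proof (asserting only that it ``easily follows'' from Theorem~\ref{thm:main}), there is no official argument to diff against; your reading is the natural one. The dictionary between equilibria and the optimal set $\mathcal{O}$ of (\ref{CP}), including the scaling to reconcile the inhomogeneous normalization $p_i\ge 1$ with the scale-invariance of equilibria, is handled carefully, and the cone-over-a-convex-set argument (together with $0$ being an extreme point of that cone, so its removal preserves convexity) gives part (ii) correctly. For part (i), the observation that $\sum_i p_i\log(p_i/\beta_i)$ equals the linear function $\sum_{ij}y_{ij}\log u_{ij}$ on $\mathcal{O}$ and is therefore affine there, forcing each convex summand $p_i\log(p_i/\beta_i)$ to be affine on $\mathcal{O}$, is exactly the right move; your rank-one Hessian step pinning down when $p\log(p/\beta)$ is affine on a segment is equivalent to invoking the equality case of the log-sum inequality ($a_1\log\tfrac{a_1}{b_1}+a_2\log\tfrac{a_2}{b_2}\ge (a_1+a_2)\log\tfrac{a_1+a_2}{b_1+b_2}$, with equality iff $a_1/b_1=a_2/b_2$), which may be a slightly more standard way to state it, but both are fine. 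One small point worth making explicit in a polished write-up: at every optimal solution one necessarily has $\beta_i=\min_{j:ij\in E}p_j/u_{ij}$, because the objective is strictly decreasing in $\beta_i$ when $p_i\ge 1>0$; this is used implicitly when you identify agent $i$'s utility with $p_i/\beta_i$ uniformly over $\mathcal{O}$.
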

Property {\em (i)} was already proved by Gale \cite{gale} in 1976 and also follows from Cornet \cite{cornet89}.
Whereas the convexity of equilibrium prices was proved by Mertens \cite{mertens03} and by Florig \cite{florig04}, both these proofs are quite involved, whereas it is a straightforward consequence of Theorem~\ref{thm:main}.  We are not aware of previous proofs on the convexity of $y$. In contrast,
Cornet \cite{cornet89} proved that $(x,\log p)$ is convex at equilibria; here $x_{ij}=y_{ij}/p_j$ is the amount of good $j$ allocated to 
agent $i$.

The Lagrangian dual of (\ref{CP}) is similar to
Cornet's program \cite{cornet89} (see (\ref{CPC}) in
Section~\ref{sec:general}) but is different from it. Also, analyzing the optimal
Lagrangian multipliers for (\ref{CP}) we can derive the
feasibility convex program  \cite{nenakov83,jain}; these
correspondences will be explained in Section~\ref{sec:general}. Our program exhibits some new and advantageous features as compared to Cornet's:
\begin{itemize}
\item The program (\ref{CP}) provides necessary and sufficient condition of the existence of equilibria. In contrast,
Cornet's program provides only a stronger sufficient condition given by Gale \cite{gale57}.  
\item The program (\ref{CP}) is feasible if and only if there exists an equilibrium. In contrast, Cornet's program can be feasible also if there exists no equilibrium; in this case the objective is unbounded.
\item The program (\ref{CP}) also demonstrates the existence of a rational equilibrium, which is not the case with Cornet's program.
\item All constraints in (\ref{CP}) are linear.
\item Our program establishes links to known convex programs for the Fisher model.
\end{itemize}
We think that the discovery of this convex program will pave the way for more efficient (and in particular, 
strongly polynomial time) algorithms for this model. 

\medskip

The rest of the paper is structured as follows.
 Section~\ref{sec:main-proof} is
dedicated to the proof of Theorem~\ref{thm:main}. This is based on the
KKT conditions, however, the argument is not
straightforward, in contrast to similar arguments for the convex
programs of \cite{eisenberg59,Shmyrev09,BDX}. 
 Section~\ref{sec:general}
shows the equivalence of our existence condition (\ref{cond:suff})
to previous results by Gale \cite{gale57,gale}, 
 exhibits the previous convex programs \cite{cornet89,nenakov83,jain}, and
explains the correspondence between our formulation (\ref{CP}) and these programs.


\medskip

\section{Proof of Theorem~\ref{thm:main}}\label{sec:main-proof}
Let us first verify that (\ref{CP}) is actually a convex program. The
feasible region is defined by linear constraints, so we only have to
check that the objective is convex. The terms corresponding to the
$y_{ij}$'s are linear. The term $\sum_{i\in
  A}p_i\log\frac{p_i}{\beta_i}$ is the relative entropy of $p$ and
$\beta$ and is well-known to be convex in the nonnegative variables $p_i$ and $\beta_i$.\footnote{%
Let us give a simple proof. We need to verify that for every
$q,b,q',b'\ge 0$ and $0<\lambda<1$, we have
\[
\lambda q\log \frac{q}b+(1-\lambda) q'\log \frac{q'}{b'}\ge (\lambda q+(1-\lambda)q')\log \frac{\lambda q+(1-\lambda)q'}{\lambda b+(1-\lambda)b'}.
\]
This can be derived using the convexity of $x\log x$ for $q/b$, $q'/b'$ with the linear combination $\lambda^*=\frac{\lambda b}{\lambda b+(1-\lambda)b'}$.}
Let us now verify the feasibility claim.
\begin{claim}\label{cl:suff}
The convex program (\ref{CP}) is feasible if and only if (\ref{cond:suff}) holds.
\end{claim}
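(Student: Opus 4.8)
The plan is to prove both directions separately, with the forward direction (necessity of ($\star$)) being the easier one and the backward direction (sufficiency) carrying the real content.

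For necessity, I would argue by contradiction exactly as in the paragraph of the excerpt that established ($\star$) as a necessary condition for equilibrium. Suppose $\{k\}$ is a singleton strongly connected component with no loop, and let $T$ be the set of nodes reachable from $k$ by a directed path in $E$. The constraints $\sum_{i:\,ij\in E} y_{ij}=p_j$ and $\sum_{j:\,ij\in E} y_{ij}=p_i$ say that $y$ is a circulation with "throughput" $p_i$ at each node. Summing the node-balance equations over $T\cup\{k\}$: all arcs leaving $T\cup\{k\}$ go back into $T$ (since $k$ is in no cycle and $T$ is closed under reachability), so money only flows into $T$ from $k$ but never returns to $k$; writing out the flow conservation on the cut forces $p_k = \sum_{j:\,kj\in E} y_{kj}$ with all such $j\in T$, while nothing flows back, and chasing the equalities yields $p_k=0$, contradicting $p_k\ge 1$. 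I would write this as a short computation on the cut $(T\cup\{k\},\,A\setminus(T\cup\{k\}))$.

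For sufficiency, assume ($\star$) holds; I must exhibit a feasible $(p,\beta,y)$. The key observation is that ($\star$) guarantees every node of $(A,E)$ lies on a directed cycle (a singleton component has a loop, i.e.\ a length-one cycle; a component of size $\ge 2$ is strongly connected hence every node is on a cycle inside it). So $E$ is covered by a finite collection of directed cycles $C_1,\dots,C_m$. For each cycle $C_\ell$, put a uniform circulation of value $1$ around it; summing these gives a circulation $y\ge 0$ on $E$ in which every node has strictly positive throughput. Define $p_i := \sum_{j:\,ij\in E} y_{ij}$ (equivalently the sum over incoming arcs, which agrees by construction), scaled up by a large constant so that $p_i\ge 1$ for all $i$ — scaling $y$ and $p$ simultaneously preserves the two circulation constraints. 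Finally set $\beta_i := \min_{j:\,ij\in E} p_j / u_{ij}$, or simply a small enough positive constant, so that $u_{ij}\beta_i\le p_j$ holds for every $ij\in E$; since every agent has an outgoing arc and all $p_j>0$, this is well-defined and positive. Then $(p,\beta,y)$ satisfies all constraints of (\ref{CP}).

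The main obstacle is purely bookkeeping: verifying that the cycle-cover circulation actually assigns positive throughput to \emph{every} node and that the two balance constraints $\sum_{i:\,ij\in E}y_{ij}=p_j$ and $\sum_{j:\,ij\in E}y_{ij}=p_i$ hold simultaneously with the \emph{same} $p$ — this works because for a union of cycle-circulations, in-flow equals out-flow at each node, so defining $p$ by either sum gives the same value. I should also note the minor point that $\beta_i$ may need to be chosen strictly positive only to keep the objective well-defined; the constraint $\beta\ge 0$ together with $u_{ij}\beta_i\le p_j$ is what must be checked, and both are immediate. No fixed-point or limiting argument is needed here — the construction is explicit — so this claim is genuinely the "easy" part of Theorem~\ref{thm:main}, with all the difficulty deferred to extracting an equilibrium from an \emph{optimal} solution via the KKT conditions.
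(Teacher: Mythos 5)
Your proof is correct, and your sufficiency direction (building a feasible point from the cycle cover that ($\star$) guarantees, defining $p$ as throughput and $\beta_i=\min_j p_j/u_{ij}$) is essentially identical to the paper's. The necessity direction takes a genuinely different route: the paper decomposes the circulation $y$ into a nonnegative combination of directed cycles, observes that every cycle lies inside a strongly connected component and hence none is incident to the loopless singleton $i_0$, so the throughput at $i_0$ is $0$, contradicting $p_{i_0}\ge 1$. You instead run a flow-accounting argument on the cut around $T\cup\{k\}$ — which is in fact the argument the paper uses in the Introduction (where it shows ($\star$) is necessary for \emph{equilibrium} existence, before (\ref{CP}) is introduced), so you have simply transported that argument into the proof of the claim. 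Both routes are elementary and valid. One caution on your version: the deduction ``nothing flows back to $k$, hence $p_k=0$'' does not follow from balance at $k$ alone, since a feasible circulation may send money into $k$ from outside $T\cup\{k\}$. The clean version is to compare total money received by $T$, which equals $\sum_{j\in T}p_j$, against total money spent by $T\cup\{k\}$, namely $p_k+\sum_{j\in T}p_j$, all of which lands in $T$; this gives $\sum_{j\in T}p_j\ge p_k+\sum_{j\in T}p_j$ and hence $p_k\le 0$. The cycle-decomposition proof avoids this cut bookkeeping and is arguably a hair cleaner, but both are fine. You are also right that no scaling is strictly needed in the sufficiency step: with $y$ a sum of $0$--$1$ cycle indicators having at least one cycle through each node, the throughput $p_i$ is automatically an integer $\ge 1$.
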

\begin{proof}
Assume that (\ref{cond:suff}) is violated, that is, there is a strongly connected component consisting of a single node $i_0$, and there is no loop in $E$ incident to $i_0$ (that is, $u_{i_0i_0}=0$.)
For a contradiction, assume (\ref{CP}) admits a feasible solution
$(y,p,\beta)$. Then $y$ gives a feasible circulation on the graph
$(A,E)$ such that there is a positive amount of flow entering (and
leaving) every node.
%
The circulation $y$ can be decomposed to a weighted sum of directed
cycles: $y=\sum_{k=1}^t w_k\chi_{C_k}$, where for each $1\le k\le t$,
$\chi_{C_k}$ is the 0-1 incidence vector of a directed cycle $C_k$,
and $w_k\ge 0$.
Clearly every cycle  $C_k$ must be contained inside a strongly
connected component. Hence no cycle may be incident to $i_0$, that is,
the flow entering this node is 0, a contradiction.

Assume now that (\ref{cond:suff}) is satisfied. Consequently, there is
a directed cycle $C_i$ in $(V,A)$ incident to every node $i$.
Set $y=\sum_{i\in A}\chi_{C_i}$, and let $p_i$ denote the amount of
$y$ entering the node $i$.
This gives a feasible solution to (\ref{CP}) with $\beta_i=\min_{j\in A}\frac{p_j}{u_{ij}}$.
\end{proof}

\begin{claim}\label{claim:non-neg}
The objective value in \ref{CP}) is non-negative, and it is 0 if and
only if the prices $p_i$ and the allocations  $x_{ij}=y_{ij}/p_j$ form
a market equilibrium. Conversely, for every market equilibrium
$p'_i,x'_{ij}$, we get an optimal solution to \ref{CP}) by setting
$p_i=\alpha p'_i$, $y_{ij}=\alpha p_jx_{ij}$, $\beta_i=\min_{j\in A} \alpha p'_j/u_{ij}$, where 
$\alpha=1/\min\{1,\min_{i\in A} p_i\}$.
\end{claim}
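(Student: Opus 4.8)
The plan is to first rewrite the objective of (\ref{CP}) as a single sum over arcs. A feasible solution with $\beta_i=0$ for some $i$ has objective $+\infty$ (since $p_i\ge 1>0$), so we may assume $\beta_i>0$ throughout. Then, using the column constraint $\sum_{i:ij\in E}y_{ij}=p_j$ and relabeling the index, $\sum_{i\in A}p_i\log p_i=\sum_{j\in A}p_j\log p_j=\sum_{ij\in E}y_{ij}\log p_j$; and using the row constraint $\sum_{j:ij\in E}y_{ij}=p_i$, $\sum_{i\in A}p_i\log\beta_i=\sum_{ij\in E}y_{ij}\log\beta_i$. Substituting these into the objective yields
\[
\sum_{ij\in E}y_{ij}\log\frac{p_j}{u_{ij}\beta_i}.
\]
Every summand is nonnegative, since $y_{ij}\ge 0$ and the arc constraint $u_{ij}\beta_i\le p_j$ gives $\log\frac{p_j}{u_{ij}\beta_i}\ge 0$. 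This proves the objective is nonnegative, and it equals $0$ exactly when $y_{ij}>0$ implies $p_j=u_{ij}\beta_i$ for every $ij\in E$.

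Next I would read the equilibrium off a zero-objective solution. Setting $x_{ij}=y_{ij}/p_j$ (legitimate as $p_j\ge 1$), the two flow constraints are precisely market clearing, $\sum_i x_{ij}=1$ and $p_i=\sum_j x_{ij}p_j$, and $p_i\ge 1$ gives $p_i>0$. Since $p_i=\sum_{j:ij\in E}y_{ij}\ge 1$, some arc leaving $i$ carries positive flow, so the equality condition $y_{ij}>0\Rightarrow p_j=u_{ij}\beta_i$ combined with $u_{ij}\beta_i\le p_j$ on all arcs forces $\beta_i=\min_{j:ij\in E}p_j/u_{ij}$, i.e.\ $\beta_i$ is the inverse best bang-per-buck; and whenever $x_{ij}>0$ we get $u_{ij}/p_j=1/\beta_i=\max_{j':ij'\in E}u_{ij'}/p_{j'}$, which is the optimal-bundle condition. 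Hence $(p,x)$ is a market equilibrium.

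For the converse, given a market equilibrium $(p',x')$ I would rescale by $\alpha=1/\min\{1,\min_i p'_i\}\ge 1$ and set $p_i=\alpha p'_i$, $y_{ij}=p_j x'_{ij}$, and $\beta_i=\min_{j:ij\in E}p_j/u_{ij}$. Feasibility is then immediate: market clearing for $(p',x')$ gives the flow equalities (uniform scaling leaves them intact), $u_{ij}\beta_i\le p_j$ holds by the definition of $\beta_i$, $p_i=\alpha p'_i\ge 1$, and $y,\beta\ge 0$. A rescaled equilibrium is still an equilibrium, so $x'_{ij}>0$ forces $p_j/u_{ij}=\min_{j'}p_{j'}/u_{ij'}=\beta_i$; thus each arc with $y_{ij}>0$ contributes $y_{ij}\log 1=0$, the objective is $0$, and by the first part the solution is optimal.

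I do not expect a genuine obstacle here: once the objective is put in the form $\sum_{ij\in E}y_{ij}\log(p_j/(u_{ij}\beta_i))$, the rest is complementary-slackness bookkeeping against the flow and bang-per-buck constraints. The only points that need a little care are the logarithmic conventions ($0\log 0=0$ and $a\log(a/0)=+\infty$ for $a>0$), the harmless rescaling factor $\alpha$, needed only to meet the normalization $p_i\ge 1$ that is absent from the definition of equilibrium, and checking that the maximum in the optimal-bundle condition is over the right index set (the arcs out of $i$, equivalently the $j$ with $u_{ij}>0$).
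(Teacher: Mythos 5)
Your proof is correct and takes essentially the same route as the paper: both use the two flow constraints to convert the node sum $\sum_i p_i\log(p_i/\beta_i)$ into a sum over arcs and then compare termwise against $-\sum_{ij}y_{ij}\log u_{ij}$ via the constraint $u_{ij}\beta_i\le p_j$. You package the whole objective as the single nonnegative sum $\sum_{ij\in E}y_{ij}\log\bigl(p_j/(u_{ij}\beta_i)\bigr)$ rather than bounding one term by the other, and you spell out the equilibrium/complementary-slackness verification more explicitly, but the underlying calculation is the same.
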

\begin{proof}
By the third inequality, $-\log u_{ij}\ge \log \beta_i-\log p_j$.
Hence the second term in the objective is at least 
\begin{align*}
\sum_{ij\in E} (\log \beta_i-\log p_j)y_{ij}=\sum_{i\in A} \log \beta_i \left(\sum_{j:ij\in E} y_{ij}\right) -\sum_{j\in A} \log p_j \left(\sum_{j:ij\in E} y_{ij}\right)=\\
\sum_{i\in A}p_i\log \beta_i-\sum_{j\in A}p_j\log p_j=-\sum_{i\in A}p_i\log \frac{p_i}{\beta_i}.
\end{align*}
This implies that the objective value is $\ge 0$. Moreover, the lower bound is tight if and only if $u_{ij}\beta_i=p_j$ whenever $y_{ij}>0$. This is equivalent to all transactions being best bang-per-buck purchases. It is easy to verify that the solution represents a market equilibrium.
The second part also follows easily.
\end{proof}

The proof of the  assertion  in Theorem~\ref{thm:main} that optimal
solutions to (\ref{CP}) correspond to market equilibria is complete by the following lemma.
\begin{lemma}\label{lem:main-0}
Whenever (\ref{CP}) is feasible, the optimum value is 0.
\end{lemma}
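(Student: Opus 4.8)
The plan is to use convex duality / KKT optimality to show that any optimal solution to~(\ref{CP}) must have objective value $0$; by Claim~\ref{claim:non-neg} this is equivalent to showing that an optimal solution yields a market equilibrium, i.e.\ that every positive transaction $y_{ij}>0$ is a best bang-per-buck purchase ($u_{ij}\beta_i=p_j$). We already know the objective is bounded below by $0$ (Claim~\ref{claim:non-neg}) and that a feasible solution exists (Claim~\ref{cl:suff}), and since the objective is continuous and convex on the feasible polyhedron and the relative-entropy term forces the objective to $+\infty$ as the solution escapes to infinity along most directions, a minimizer is attained (some care is needed because $p_i$ are only bounded below, not above, and $\beta_i$ can be $0$; I would argue existence of an optimum by a compactness/coercivity argument, scaling a minimizing sequence and using that the cycle-decomposition keeps the circulation bounded). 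So the real content is: at an optimum, the objective is $0$.

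First I would write the Lagrangian of~(\ref{CP}), introducing multipliers $r_j$ for the first (in-flow $=p_j$) constraints, $s_i$ for the second (out-flow $=p_i$) constraints, $\gamma_{ij}\ge 0$ for $u_{ij}\beta_i\le p_j$, and $\delta_i\ge 0$ for $p_i\ge 1$. Stationarity in $y_{ij}$ gives $r_j+s_i-\log u_{ij}=\lambda_{ij}$ where $\lambda_{ij}\ge0$ is the multiplier on $y_{ij}\ge 0$, hence $r_j + s_i \ge \log u_{ij}$, with equality whenever $y_{ij}>0$. Stationarity in $p_i$ gives $\log(p_i/\beta_i)+1 = r_i + s_i + \sum_{j:ij\in E}\gamma_{ij}\cdot[\text{coefficient}] - \delta_i$ — I need to track the $p_j$ appearing in the constraint $u_{ij}\beta_i\le p_j$ — and stationarity in $\beta_i$ gives $-p_i/\beta_i = \sum_{j:ij\in E} u_{ij}\gamma_{ij} - (\text{multiplier on }\beta_i\ge0)$, i.e.\ $p_i/\beta_i = (\text{something})- \sum_j u_{ij}\gamma_{ij}$. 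The aim is to combine these stationarity conditions, multiply by $p_i$ or $y_{ij}$ as appropriate, sum over all agents/arcs, and telescope the flow-conservation constraints (exactly as in the proof of Claim~\ref{claim:non-neg}, where $\sum_{ij}(\log\beta_i-\log p_j)y_{ij}$ collapsed via in/out-flow $=p_i$) so that all the dual terms cancel and one is left with an identity forcing the primal objective to equal $0$.

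The key algebraic step I expect to be delicate is handling complementary slackness for the constraint $u_{ij}\beta_i\le p_j$: I want to show $\gamma_{ij}=0$ unless this is tight, and simultaneously that $\beta_i>0$ at the optimum (so that $p_i\log(p_i/\beta_i)$ is finite and the relative-entropy term behaves), and that the $p_i\ge 1$ constraints and their multipliers $\delta_i$ do not obstruct the telescoping — morally, summing the $\delta_i p_i$ contributions should either vanish or be absorbed. The cleanest route is probably: take an optimal $(y,p,\beta)$, scale it (by $\alpha$ as in Claim~\ref{claim:non-neg}) so that WLOG we may also perturb without the $p_i\ge1$ constraints binding in a harmful way, then use the stationarity equations to write the objective $\sum_i p_i\log(p_i/\beta_i) - \sum_{ij} y_{ij}\log u_{ij}$ entirely in terms of the multipliers and the (telescoping) flow constraints, and conclude it is $\le 0$; combined with Claim~\ref{claim:non-neg} this gives $=0$. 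The main obstacle, then, is not a single hard inequality but the bookkeeping: correctly deriving and combining all four families of KKT conditions and verifying that the boundary phenomena ($\beta_i\to 0$, $p_i$ hitting its lower bound) do not arise at an optimum — this is exactly where the authors warn "the argument is not straightforward."
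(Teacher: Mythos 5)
Your overall approach — set up KKT, use complementary slackness, and telescope the flow-conservation constraints exactly as in Claim~\ref{claim:non-neg} — is the right opening move and matches the paper. The telescoping does work out cleanly and you can show that $\beta_i>0$ at any optimum (so \eqref{kkt-3} is tight), and that the multipliers $\gamma_{ij}$ for $u_{ij}\beta_i\le p_j$ are supported on tight constraints. So far so good.

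The gap is what that telescoping actually yields. When you multiply the stationarity equation for $p_i$ by $p_i$ and sum, the term $\tau_i p_i$ (your $\delta_i p_i$) does not cancel: by complementary slackness it becomes simply $\tau_i$, and the final identity is
\[
\sum_{i\in A} p_i\log\frac{p_i}{\beta_i}-\sum_{ij\in E}y_{ij}\log u_{ij}=\sum_{i\in A}\tau_i ,
\]
with $\tau_i\ge 0$. This gives the objective $\ge 0$, which you already knew from Claim~\ref{claim:non-neg}; it does \emph{not} give $\le 0$, and there is no way to flip the sign by bookkeeping. So the sentence ``the $\delta_i p_i$ contributions should either vanish or be absorbed'' is precisely the missing theorem, not a clean-up step: you must prove $\tau_i=0$ for every $i$, equivalently that at an optimum the price floor $p_i\ge 1$ can be made non-binding in the dual. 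The paper's entire Lemma~\ref{lem:tau-0} is devoted to this, and it is genuinely combinatorial: set $q_i=e^{\delta_i}$, $\theta_i=e^{\gamma_i}$, look at the set $S=\arg\max_i q_i/p_i$, show the arcs of $\supp(y)\cup\supp(w)$ neither enter nor leave $S$, conclude $\tau_j=0$ for $j\in S$, and recurse on $A\setminus S$. Without some argument of this kind, the telescoping alone cannot close the proof.

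A secondary gap: you flag the attainment of the minimum as needing ``a compactness/coercivity argument'' but leave it as a sketch. This is legitimate to worry about (the feasible region is unbounded and the objective has $\beta_i$ in a log-denominator), but it is orthogonal to the main difficulty above, which would persist even after establishing existence of an optimizer.
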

Let us now formulate the Karush-Kuhn-Tucker conditions on
optimality. Since all constraints in (\ref{CP}) are linear, these are
sufficient and feasible for optimality.
Consider an optimal solution $(p,y,\beta)$, and 
let us associate Lagrangian multipliers $\delta_j$, $\gamma_i$, $w_{ij}$ and $\tau_i$ to the inequalities in the order as described in (\ref{CP}).
We obtain the following conditions.
\begin{align}
-\delta_j+\gamma_i&\le -\log u_{ij}\quad\forall ij\in E\label{kkt-1}\\
\delta_i-\gamma_i+\sum_{j:ji\in E}w_{ji}+\tau_i&=\log \frac{p_i}{\beta_i} +1\quad\forall i\in A\label{kkt-2}\\
-\sum_{j:ij\in E} u_{ij}w_{ij}&\le -\frac {p_i}{\beta_i}\quad \forall i\in A\label{kkt-3}
\end{align}
Also, (\ref{kkt-1}) must be tight for all $y_{ij}>0$, and (\ref{kkt-3})
must be tight for all $\beta_i>0$. Further, $\tau_i>0$ implies
$p_i=1$, and 
$w_{ij}>0$ implies $u_{ij}\beta_i=p_j$. Note that in an
optimal solution every $\beta_i>0$, and hence (\ref{kkt-3}) always
holds with equality.
We can therefore derive the following from (\ref{kkt-3}):
\begin{equation}
p_i=\sum_{j:ij\in E}u_{ij}\beta_i w_{ij}=\sum_{j:ij\in E}p_j w_{ij}.\label{kkt-3b}
\end{equation}

The following remark can be interpreted as a ``self-duality''
property: a market equilibrium does not only provide a primal optimal
solution to (\ref{CP}) but also optimal Lagrangian multipliers.
\begin{remark}\label{rem:self-dual}
Assume there exists a market equilibrium $(p,x)$; by re-scaling, we may
assume $p_i\ge 1$ for all $i\in A$. As in Claim~\ref{claim:non-neg},
$p$, $y_{ij}=p_jx_{ij}$ and $\beta_i=\min_{i\in A} p_j/u_{ij}$ give an optimal solution to (\ref{CP}). 
It is straightforward to check that $\gamma_j=\log \beta_j$,
$\delta_j=\log p_j$, $w_{ij}=x_{ij}$  and $\tau\equiv 0$ give optimal
Lagrangian multipliers.
\end{remark}

The next claim expresses the optimum objective value of (\ref{CP}) in
terms of the Lagrangian multipliers.
\begin{claim}
Let $(y,p,\beta)$ be a primal optimal solution, and let $(\gamma,\delta,w,\tau)$ be optimal Lagrangian multipliers. Then
\[
\sum_{i\in A} p_i\log\frac{p_i}{\beta_i}-\sum_{ij\in E}y_{ij}\log u_{ij}=\sum_{i\in A}\tau_i
\]
\end{claim}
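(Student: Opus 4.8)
The plan is to substitute the stationarity condition (\ref{kkt-2}) into the objective and then collapse everything using the two flow-conservation constraints of (\ref{CP}) together with complementary slackness. Recall that every $\beta_i>0$ in an optimal solution, so (\ref{kkt-3}) holds with equality and (\ref{kkt-3b}) is available throughout.

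First I would rewrite (\ref{kkt-2}) as $\log\frac{p_i}{\beta_i}=\delta_i-\gamma_i+\sum_{j:ji\in E}w_{ji}+\tau_i-1$, multiply through by $p_i$, and sum over $i\in A$. This expresses $\sum_{i\in A}p_i\log\frac{p_i}{\beta_i}$ as a sum of five pieces. Two of them cancel: reindexing $\sum_{i\in A}p_i\sum_{j:ji\in E}w_{ji}$ as a sum over the arcs of $E$ (the coefficient of $w_{ab}$ being $p_b$, the price of the head) and then grouping by the tail, (\ref{kkt-3b}) gives exactly $\sum_{i\in A}p_i$, which cancels the $-\sum_{i\in A}p_i$ piece. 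The piece $\sum_{i\in A}p_i\tau_i$ equals $\sum_{i\in A}\tau_i$ by complementary slackness, since $\tau_i>0$ forces $p_i=1$. What survives is $\sum_{i\in A}p_i\log\frac{p_i}{\beta_i}=\sum_{i\in A}p_i\delta_i-\sum_{i\in A}p_i\gamma_i+\sum_{i\in A}\tau_i$.

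Next I would handle the term $-\sum_{ij\in E}y_{ij}\log u_{ij}$. Because (\ref{kkt-1}) is tight whenever $y_{ij}>0$, I may replace $-\log u_{ij}$ by $\gamma_i-\delta_j$ in this sum, obtaining $\sum_{ij\in E}y_{ij}\gamma_i-\sum_{ij\in E}y_{ij}\delta_j$. Grouping the first sum by the tail and using $\sum_{j:ij\in E}y_{ij}=p_i$ turns it into $\sum_{i\in A}p_i\gamma_i$; grouping the second by the head and using $\sum_{i:ij\in E}y_{ij}=p_j$ turns it into $\sum_{j\in A}p_j\delta_j$. Adding this to the expression for the first term, the $\gamma$ and $\delta$ contributions cancel in pairs, leaving $\sum_{i\in A}\tau_i$, which is the claim.

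The argument is essentially bookkeeping; the only points requiring care are keeping straight which KKT relations are tight — (\ref{kkt-1}) on arcs with positive flow, (\ref{kkt-3}) everywhere because $\beta>0$, and $\tau_i>0\Rightarrow p_i=1$ — and performing the two arc reindexings (tail-grouping versus head-grouping) correctly so that the flow-conservation constraints and (\ref{kkt-3b}) apply. I do not anticipate a genuine obstacle beyond this.
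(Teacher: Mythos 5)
Your proof is correct and is essentially the paper's own argument: multiply (\ref{kkt-2}) by $p_i$ and sum, multiply the tight instances of (\ref{kkt-1}) by $y_{ij}$ and sum, then cancel via the degree constraints, (\ref{kkt-3b}), and complementary slackness for $\tau$. The only difference is purely cosmetic — you cancel the $\sum p_i$ and $\sum p_i w_{ji}$ terms before combining the two pieces, while the paper combines first and then cancels.
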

\begin{proof}
By complementary slackness, (\ref{kkt-1}) is tight whenever $y_{ij}>0$. Taking the combination of these equalities multiplied by $y_{ij}$, we get
\[
-\sum_{ij\in E}y_{ij}\log u_{ij}=\sum_{ij\in E}y_{ij}(\gamma_i-\delta_j)=\sum_{i\in A}(\gamma_{i}-\delta_i)p_i.
\]
In the second equality, we used the degree constraints in (\ref{CP}).
Next, let us add the equalities (\ref{kkt-2}) multiplied by $p_i$. We obtain
\begin{align*}
\sum_{i\in A}\left(p_i\log\frac{p_i}{\beta_i}+p_i\right)=
\sum_{i\in A}(\delta_i-\gamma_i)p_i+\sum_{i\in A}\sum_{j:ji\in E}w_{ji}p_i+\sum_{i\in A}\tau_ip_i=\\
\sum_{i\in A}(\delta_i-\gamma_i)p_i+\sum_{i\in A}p_i+\sum_{i\in A}\tau_i.
\end{align*}
Here we used (\ref{kkt-3b}) for the second term, and that $p_i=1$ whenever $\tau_i>0$ for the third term. Adding this to the previous inequality proves the claim.
\end{proof}

Using the previous claim,  Lemma~\ref{lem:main-0} follows from the next lemma.
\begin{lemma}\label{lem:tau-0}
For the optimal Lagrange multipliers $(\gamma,\delta,w,\tau)$, it follows that
\[
\tau_i=0\quad \forall i\in A.
\]
\end{lemma}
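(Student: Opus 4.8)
The plan is to argue by contradiction: suppose the optimal multipliers have $\tau_{i_0}>0$ for some $i_0\in A$. By the preceding claim the optimum value of (\ref{CP}) is then $\sum_i\tau_i>0$, and complementary slackness forces $p_{i_0}=1$; since every $p_i\ge 1$, the minimum price equals $1$. I will also use two normalizations that hold at any optimum: (a) $\beta_i=\min_{j:ij\in E}p_j/u_{ij}$ (otherwise increasing $\beta_i$ strictly decreases the objective), so $u_{ij}\beta_i\le p_j$ for every $ij\in E$ with equality precisely on the best bang-per-buck arcs; and (b) the identity behind Claim~\ref{claim:non-neg}, which rewrites the objective as $\sum_{ij\in E}y_{ij}\log\bigl(p_j/(u_{ij}\beta_i)\bigr)$, a sum of nonnegative terms, so the strict positivity of the objective means some $y_{ij}>0$ sits on a non-tight arc.

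The engine of the proof is the derived equality (\ref{kkt-3b}), $p_i=\sum_{j:ij\in E}w_{ij}p_j$, i.e.\ $Wp=p$ for the nonnegative matrix $W=(w_{ij})$, together with the complementarity $w_{ij}>0\Rightarrow u_{ij}\beta_i=p_j$. Since $p_i\ge 1>0$, no row of $W$ is zero, so in the support digraph $F=\{ij:w_{ij}>0\}$ every node has an outgoing arc; pick a sink strongly connected component $R$ of $F$ (no arc of $F$ leaves $R$; such $R$ exists and is either nontrivial or a single node with a $w$-loop). On $R$ the matrix $W|_R$ is nonnegative, irreducible, and satisfies $W|_R(p|_R)=p|_R$ with $p|_R>0$; hence its Perron eigenvalue is $1$ and it has a positive left Perron eigenvector $\rho$. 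Then $z_{ij}:=\rho_i w_{ij}p_j$ (for $i,j\in R$) is a nonzero circulation supported entirely on best bang-per-buck arcs inside $R$, with throughput $\rho_i p_i$ at node $i$.

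Finally I would use $z$ to build a feasible solution of (\ref{CP}) of strictly smaller objective, contradicting optimality: keep $p$ and $\beta$ unchanged outside $R$, rescale the prices on $R$ downward (permissible because $R$ is a sink class of $F$, so no bang-per-buck tightness pushes those prices up from within $R$, and lowering them only relaxes the constraints $u_{ij}\beta_i\le p_j$ on arcs entering $R$), update $\beta$ on $R$ accordingly, and reroute flow of $y$ inside $R$ onto $z$; since $z$ lives on slack-zero arcs, the reroute does not raise the objective, while the rescaling lowers it. The main obstacle — and what I expect to be the technical core — is that the throughput of $z$ is $\rho_i p_i$ rather than $p_i$, so substituting it into a valid circulation first requires showing that $\rho$ can be taken constant on $R$, equivalently $\sum_{i}w_{ij}=1$ for every $j\in R$. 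This identity should fall out of equation (\ref{kkt-2}) restricted to $R$ combined with the complementary slackness of (\ref{kkt-1}) along the cycles of $y$ lying inside $R$; establishing it, choosing $R$ so that $y$ as well does not leak out of it (otherwise the leakage argument behind (\ref{cond:suff}) applies directly), and checking that the rescaled solution still satisfies $p\ge \mathbf 1$, are the points at which the argument stops being routine.
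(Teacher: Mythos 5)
Your approach is genuinely different from the paper's, but it is not complete, and at least one of its steps appears to be backwards. The paper does not argue by contradiction via a perturbation. It runs an induction on $|A|$: setting $q_i=e^{\delta_i}$, $\theta_i=e^{\gamma_i}$, it looks at $\alpha=\max_i q_i/p_i$ and the set $S$ of maximizers, shows $p_i/\beta_i\le q_i/\theta_i$ on $S$ (Claim~\ref{cl:alpha}), converts (\ref{kkt-2}) into $\sum_i w_{ij}\le 1-\tau_j$ on $S$, and then forces equality throughout the chain (\ref{eq:multi}); this yields $\tau_j=0$ on $S$ and that no $\supp(y)\cup\supp(w)$ arc crosses the boundary of $S$, at which point the problem decomposes and induction finishes it on $A\setminus S$. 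Your route instead tries to exhibit a strictly better feasible point, using Perron--Frobenius on $Wp=p$ to find a circulation on a sink SCC $R$ of $\supp(w)$ and then rescaling prices on $R$ downward.

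Three concrete problems. First, the sign in your relaxation claim is wrong: for an arc $ij$ entering $R$ (so $i\notin R$, $j\in R$, $\beta_i$ unchanged), lowering $p_j$ \emph{tightens} the constraint $u_{ij}\beta_i\le p_j$; and since $R$ is a sink SCC of $F$ only with respect to outgoing arcs, such an arc may well lie in $F$ and be tight, so the rescaled point need not be feasible. Second, the perturbation can violate $p\ge\mathbf 1$: complementary slackness forces $p_{i_0}=1$ at the node with $\tau_{i_0}>0$, and more generally some node of $R$ may already have price $1$; you acknowledge this but do not resolve it, and it is not a detail --- it is exactly the constraint whose multiplier $\tau$ you are trying to kill. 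Third, the whole rerouting step hinges on $\rho$ being constant on $R$, i.e.\ $\sum_i w_{ij}=1$ for $j\in R$, which is precisely the content of the bound (\ref{eq:tau}) (with $\tau_j=0$) that the paper proves via Claim~\ref{cl:alpha}; you flag it as ``the technical core'' but do not supply it. Because the argument's engine (a strictly improving perturbation) depends on all three of these unproven or incorrect steps, the proposal as written does not establish Lemma~\ref{lem:tau-0}.
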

\begin{proof}
The proof is by induction on the number of agents $|A|$. We assume
that for all markets with $<|A|$ agents, the assertion holds.
Let us introduce $q_i:=e^{\delta_i}$ and
$\theta_i:=e^{\gamma_i}$. These are quantities playing a similar role to $p_i$ and $\beta_i$: the conditions (\ref{kkt-1}) can be rewritten as
\[
u_{ij}\theta_i\le q_j\quad \forall ij\in E,
\]
and furthermore by complementary slackness it follows that if
$y_{ij}>0$ then equality must hold. The $\theta_i$'s are therefore the inverse best bang-per-buck values for the prices $q$. Let $F\subseteq E$ denote the set of arcs with
$u_{ij}\beta_i=p_j$ and $H\subseteq E$ the set of arcs with
$u_{ij}\theta_i=q_j$. By complementary slackness, $\supp(y)\subseteq H$ and $\supp(w)\subseteq F$. 
Let us define
\[
\alpha:=\max_{i\in A}\frac{q_i}{p_i},\quad S:=\left\{i\in A: \frac{q_i}{p_i}=\alpha\right\}.
\]
\begin{claim}\label{cl:alpha}
We have $\frac{p_i}{\beta_i} \le \frac{q_i}{\theta_i}$ for every $i\in
S$. Further, if $ij\in F$, $i\in S$   and
$\frac{p_i}{\beta_i}=\frac{q_i}{\theta_i}$, then $j\in S$ holds.
\end{claim}
\begin{proof}
The first claim is equivalent to $\frac{\theta_i}{\beta_i}\le \alpha$ if $i\in S$. 
This follows since
\[
\theta_i=\min_{j\in A}\frac{q_{j}}{u_{ij}}\le \min_{j\in A}\frac{\alpha p_{j}}{u_{ij}}=\alpha\beta_i.
\]
For the second part, assume for a contradiction that $q_j<\alpha p_j$
for some best bang-per-back arc $ij\in F$ with $i\in S$. This would imply that the inequality above is strict, giving a contradiction.
\end{proof}

Together with (\ref{kkt-2}), this gives 
\begin{eqnarray}\label{eq:tau}
\sum_{i\in A} w_{ij} \le 1 - \tau_j \quad \forall j\in S,
\end{eqnarray}
with equality only if $\frac{p_j}{\beta_j}=\frac{q_j}{\theta_j}$.
Let 
\[
T:=\{i\in A:  j\in S\ \forall ij\in F\}
\]
 denote the sets of agents having all their best bang-per-buck goods
 in $S$ with respect to prices $p$.
Recall that $\supp(w)\subseteq F$. By the definition of $T$, we get
from (\ref{kkt-3b}) that
\begin{eqnarray}\label{eq:wp}
\sum_{j\in S}w_{ij}p_j = p_i \quad \forall i \in T.
\end{eqnarray}
Combining this with the straightforward $\sum_{j\in S}y_{ij}\le p_i$,
for all $i\in T$, we obtain
\[
\sum_{i\in T}\sum_{j\in S} w_{ij}p_j \ge \sum_{i\in T}\sum_{j\in S} y_{ij}.
\]
Rearranging the sums gives
\begin{equation}\label{eq:rearrange}
\sum_{j\in S}p_j \sum_{i\in T} w_{ij} \ge \sum_{j\in S}\sum_{i\in T} {y_{ij}}
\end{equation}
The next step requires the following observation.
\begin{claim}\label{cl:A}
For every arc $ij\in H$ with $j\in S$, it follows that $i\in T$.
\end{claim}
\begin{proof}
For  a contradiction, assume $i\notin T$, that is, there exists a
good $j'\notin S$ with $ij'\in F$. Then
\[
 \theta_i=\frac{q_{j}}{u_{ij}}=\alpha\frac{p_{j}}{u_{ij}}\ge
 \alpha \beta_i=\alpha\frac{p_{j'}}{u_{ij'}}>\frac{q_{j'}}{u_{ij'}}\ge \theta_i,
 \]
 a contradiction.
\end{proof}

Recall that $\supp(y)\subseteq H$, and therefore if $j\in S$ and $y_{ij}>0$,
then $i\in T$ must hold by the above Claim. Hence if $j\in S$, then
$\sum_{i\in T} {y_{ij}}=p_j$. Combining this with
(\ref{eq:tau}) and  (\ref{eq:rearrange}), we get 
\begin{equation}
\hspace{-1cm}\sum_{j\in S} (1-\tau_j)p_j \ \ge \  \sum_{j\in S}p_j \sum_{i\in A} w_{ij} \ \ge \  \sum_{j\in S} p_j
\sum_{i\in T} w_{ij} \ \ge \  \sum_{j\in S}\sum_{i\in T} y_{ij} \  =
\sum_{j\in S}p_j.\label{eq:multi}
 \end{equation}
We must have equality throughout, and therefore for all $j\in S$ it
follows that $\tau_j=0$ and  $\frac{p_j}{\beta_j}=\frac{q_j}{\theta_j}$;
the latter was a necessary condition for equality in (\ref{eq:tau}).
Now the second part of 
  Claim~\ref{cl:alpha} guarantees that $S\subseteq T$.

Using (\ref{eq:wp}), we have $\sum_{i\in T}\sum_{j\in
  S}w_{ij}p_j=\sum_{i\in T}p_i$. On the other hand, the above
equalities guarantee $\sum_{i\in T}\sum_{j\in
  S}w_{ij}p_j=\sum_{i\in S}p_i$. We can
therefore conclude $S=T$. Moreover, the following holds.

\begin{claim}
No arc in $\supp(y)\cup \supp(w)$ enters or leaves the set $S$.
\end{claim}
\begin{proof}
Recall that $\supp(y)\subseteq H$ and $\supp(w)\subseteq F$.
Since $S=T$, the definition of $T$ implies that no arc $ij\in F$
leaves $S$; recall that $\supp(w)\subseteq F$. The second inequality in (\ref{eq:multi}) must hold with
equality, implying that $w_{ij}=0$, whenever
$i\in A\setminus S$, $j\in S$. Claim~\ref{cl:A} implies that no arc
$ij\in H$ enters $S$, and $\supp(y)\subseteq H$. The first to equalities in (\ref{CP}) imply that
$\sum_{i\in S, j\in A\setminus S} y_{ij}=\sum_{i\in A\setminus S, j\in
  S} y_{ij}$. Hence no arc with $y_{ij}>0$ may leave $S$. 
\end{proof}

If $A=S$, then the proof of Lemma~\ref{lem:tau-0} is complete. If
$S\subsetneq A$, then consider the 
restrictions of $(p,y,\beta)$ and $(\gamma,\delta,w,\tau)$ to $A\setminus
S$, and to the
arcs inside $A\setminus S$. The first gives a feasible solution to
(\ref{CP}) on the restricted graph, whereas the second give optimal
Lagrange multipliers, since the primal-dual slackness conditions are
satisfied. According to our assumption on $S$ being
a minimal counterexample, it follows that $\tau_i=0$ for all $i\in
A\setminus S$, completing the proof.
\end{proof}

To complete the proof of Theorem~\ref{thm:main}, it is left to verify
the claim on the existence of a rational optimal solution. This will follow from the next structural
observation; note that the feasible region is a polyhedron.

\begin{claim}\label{claim:extremal}
There exists an optimal solution to (\ref{CP}) that is an extremal
point of the feasible region.
\end{claim}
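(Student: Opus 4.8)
The plan is to exhibit a single face $G$ of the feasible polyhedron $P$ of (\ref{CP}) all of whose points are optimal, and then take an extreme point of $G$. The point worth stressing is that the set of \emph{all} optimal solutions need not itself be a face of $P$: the objective is nonlinear, so its $0$-sublevel set need not be polyhedral. I would first record that $P$ is \emph{pointed}. Since $P$ lies in the region $\{(y,p,\beta):\ y\ge 0,\ \beta\ge 0,\ p\ge 1\}$, it contains no line, and hence neither does any nonempty face of $P$; consequently every nonempty face of $P$ has an extreme point, and every extreme point of a face of $P$ is an extreme point of $P$.

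Next, let $(y^*,p^*,\beta^*)$ be an optimal solution of (\ref{CP}); its value is $0$ by Lemma~\ref{lem:main-0} (if (\ref{CP}) is infeasible the claim is vacuous). Two properties of this solution will be used. First, since the value $0$ is finite we have $\beta^*_i>0$ for every $i$ (as $p^*_i\ge 1$), and then each $i$ must have an outgoing arc $ij\in E$ with $u_{ij}\beta^*_i=p^*_j$, for otherwise raising $\beta^*_i$ slightly keeps feasibility and strictly decreases the objective. Second, by Claim~\ref{claim:non-neg}, optimality forces $u_{ij}\beta^*_i=p^*_j$ for every $ij\in\supp(y^*)$. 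Now define $G$ to be the set of $(y,p,\beta)\in P$ that in addition satisfy $y_{ij}=0$ whenever $y^*_{ij}=0$ and $u_{ij}\beta_i=p_j$ whenever $u_{ij}\beta^*_i=p^*_j$. Since $G$ arises from $P$ by turning some of the defining inequalities $y_{ij}\ge 0$ and $u_{ij}\beta_i\le p_j$ into equalities, $G$ is a face of $P$, and it is nonempty because $(y^*,p^*,\beta^*)\in G$. I claim every point of $G$ is optimal: for $(y,p,\beta)\in G$ we have $\supp(y)\subseteq\supp(y^*)$, and every arc of $\supp(y^*)$ is one on which $G$ imposes $u_{ij}\beta_i=p_j$; hence $u_{ij}\beta_i=p_j$ for all $ij\in\supp(y)$. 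Moreover every $\beta_i>0$ on $G$ (apply the imposed equality to an outgoing arc $ij$ as above and use $p_j\ge 1$), so the objective is finite, and by Claim~\ref{claim:non-neg} it equals $0$, i.e., the point is optimal by Lemma~\ref{lem:main-0}. Finally, being a nonempty face of the pointed polyhedron $P$, $G$ has an extreme point, which is then an extreme point of $P$ and optimal; this is the desired solution.

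The step I expect to require the most care is the argument that the \emph{entire} face $G$ is optimal. It rests on the fact, extracted from Claim~\ref{claim:non-neg}, that optimality of a feasible point of (\ref{CP}) is a purely combinatorial condition — it depends only on $\supp(y)$ and on which bang-per-buck inequalities $u_{ij}\beta_i\le p_j$ are tight — which is exactly what lets us replace the (non-polyhedral) optimal set by an honest face of $P$ on which the objective vanishes identically. The remaining ingredients (pointedness of $P$, that $G$ is a face, and that extreme points of faces are extreme points of the whole polyhedron) are standard polyhedral facts.
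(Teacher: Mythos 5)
Your proof is correct, and it takes a genuinely different route from the paper's, although both hinge on the same core observation that optimality in (\ref{CP}) is a combinatorial condition, depending only on which of the inequalities $y_{ij}\ge 0$ and $u_{ij}\beta_i\le p_j$ are tight. The paper picks an optimal $z$, writes it by Minkowski--Weyl as a convex combination of extreme points plus a conic combination of extreme rays, selects an extreme point $z^*$ appearing with positive weight, and argues by contradiction: any inequality strict at $z^*$ must remain strict at $z$ (since every other term in the decomposition respects the inequality), so if $y^*_{ij}>0$ and $u_{ij}\beta^*_i<p^*_j$ simultaneously, the same would hold at $z$, contradicting that $z$ is a market equilibrium. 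Your approach instead packages the tight-constraint pattern at $z^*$ into an explicit face $G$ of the feasible polyhedron, shows every point of $G$ has objective value $0$ (carefully checking finiteness via $\beta_i>0$), and then takes an extreme point of $G$. Your version has the merit of making pointedness of $P$ explicit — the paper tacitly relies on it when invoking the decomposition — and of isolating cleanly where the nonlinearity of the objective enters. The paper's argument is more compact. One small simplification available to you: the existence of a tight outgoing arc for each $i$ follows directly from $\sum_{j:ij\in E}y^*_{ij}=p^*_i\ge 1>0$, which forces some $y^*_{ij}>0$ and hence (by Claim~\ref{claim:non-neg}) $u_{ij}\beta^*_i=p^*_j$ with $p^*_j\ge 1$; no perturbation argument is needed.
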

\begin{proof}
Consider an optimal solution $z=(p,y,\beta)$ to (\ref{CP}); by the
above, we know that it corresponds to a market equilibrium. As every
point in the feasible region, $z$ can be written as
the sum of extremal rays and a convex combination of
extremal points.
Pick an arbitrary extremal point $z^*=(p^*,y^*,\beta^*)$ from the
combination.
We claim that this is also an optimal solution to (\ref{CP}). By
Claim~\ref{claim:non-neg}, it suffices to show that it corresponds to a
market equilibrium, which is equivalent to $u_{ij}\beta^*_i=p^*_j$
whenever $y^*_{ij}>0$. For a contradiction, assume
$u_{ij}\beta^*_i<p^*_j$ and $y^*_{ij}>0$ holds for an $ij\in E$. Since
$z^*$ is included in the convex combination giving $z$, every strict inequality for
$z^*$ must also be strict for $z$; this would contradict the
optimality of $z$.
\end{proof}

Since every extremal point of a rational
polyhedron is rational with polynomially bounded size, the proof
of Theorem~\ref{thm:main} is complete. 
Next we derive the bound on the values of equilibrium prices and allocation.
For this, we assume that all $u_{ij}$'s are integers, since scaling them
by a constant does not change the equilibrium. 

\begin{lemma}
Assume all utilities are integers $\le U$ and we let $\Delta:=2^{n-1}(n+3)^{n+\frac12}U^{n}$. Then there
exists equilibrium prices $p$ that are quotients of two integers $\le
\Delta$, along with allocations $x$ that are quotients of two integers $\le \Delta^2$.
\end{lemma}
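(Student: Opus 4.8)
The plan is to leverage Claim~\ref{claim:extremal}. Fix an optimal solution $(p,y,\beta)$ of (\ref{CP}) that is an extremal point of the feasible polyhedron $P\subseteq\R^{2n+|E|}$; by Claim~\ref{claim:non-neg} the prices $p$ together with $x_{ij}=y_{ij}/p_j$ form a market equilibrium, so it suffices to bound the bitsizes of the $p_i$ and of this $x$. Being a vertex of $P$, the point $(p,y,\beta)$ is the unique solution of the subsystem of (\ref{CP}) formed by a maximal linearly independent set of tight constraints. The strategy is to simplify this subsystem, first eliminating the $y$- and $\beta$-variables, into a small \emph{integer} system that pins down $p$, and then to recover $y$ from $p$.

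The first step is structural. View $\supp(y)$ as a bipartite graph $G_y$ whose two sides are two copies of $A$, with the arc $ij$ joining the spender-copy of $i$ to the good-copy of $j$. Then $G_y$ must be a forest: a cycle in $G_y$ yields an alternating $\pm\varepsilon$ perturbation of $y$ around it that preserves every flow equality (each node's money balance is unchanged), leaves $p$ and $\beta$ untouched, and remains nonnegative for small $\varepsilon$ of either sign, contradicting extremality. Hence $|\supp(y)|\le 2n-1$, and the incidence matrix of $G_y$ has full column rank on $\supp(y)$; consequently, once $p$ is known, the flow equalities determine $y|_{\supp(y)}$ as the unique flow on the forest $G_y$ with node balances $p$, and $y_{ij}=0$ off $\supp(y)$. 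Moreover $\beta_i>0$ in every optimal solution, so by Claim~\ref{claim:non-neg} every $ij\in\supp(y)$ satisfies $u_{ij}\beta_i=p_j$; since every agent $i$ has $\sum_j y_{ij}=p_i\ge 1>0$ and hence an out-arc in $\supp(y)$, this determines $\beta$ from $p$ as well.

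Carrying out these eliminations inside the vertex-defining subsystem leaves a nonsingular $n\times n$ system $Mp=c$ with $c\in\{0,1\}^n$ (nonsingular because the original subsystem is), in which each row is either $p_i=1$ or has the form $u_{ij}p_{j'}-u_{ij'}p_j=0$ coming from two tight bang-per-buck arcs $ij,ij'$ at the same agent $i$; thus $M$ is an integer matrix with at most two nonzero entries per row, each of absolute value at most $U$. Cramer's rule gives $p_i=\det(M_i)/\det(M)$ with $\det(M),\det(M_i)\in\mathbb Z$, and Hadamard's inequality (applied to $M$, and to the cofactor expansion of $M_i$ along the replaced column) bounds both determinants; the resulting estimate is at most $\Delta$, so each $p_i$ is a quotient of two integers at most $\Delta$. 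Fixing the common denominator $b:=|\det(M)|\le\Delta$ so that $p_j=a_j/b$ with integers $a_j\le\Delta$, the forest flow $y|_{\supp(y)}$ is obtained from the $a_j$ by leaf-by-leaf elimination, so each $y_{ij}$ is a $\{0,\pm1\}$-combination of the $a_j$, divided by $b$; since also $0\le y_{ij}\le\sum_{j'}y_{ij'}=p_i=a_i/b$, its numerator is at most $\Delta$. Hence $x_{ij}=y_{ij}/p_j$ is a quotient of two integers at most $\Delta^2$, which completes the proof.

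The only delicate points are the forest structure of $\supp(y)$ and the accompanying elimination of the $y$- and $\beta$-variables; this is exactly what keeps the exponent of $U$ equal to $n$, whereas applying Hadamard's inequality directly to the full vertex-defining subsystem of (\ref{CP}) would yield a far weaker bound, of the order $U^{|E|}$.
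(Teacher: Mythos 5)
Your approach shares the key combinatorial observation with the paper (that $\supp(y)$, viewed as a bipartite graph, is a forest at a vertex, hence $m\le 2n-1$), but then diverges: the paper leaves the $y$- and $\beta$-columns in place and applies Cramer/Hadamard directly to the binding subsystem $Cz=b$ with $m+2n$ columns, reading off the column norms $\sqrt 2$, $\sqrt{n+3}$, $\sqrt n\,U$; you instead eliminate $y$ and $\beta$ first to get a small integer system purely in $p$, and bound that. Your route is in the spirit of the paper's closing Remark (which asserts without proof that the bound tightens to $n!U^n$) and would indeed yield a sharper bound if the elimination were carried out correctly. However, as written it has a gap.

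The gap is in the claimed form of $M$. Eliminating $\beta_i$ from the tight bang-per-buck equations $u_{ij}\beta_i=p_j$ does produce rows $u_{ij}p_{j'}-u_{ij'}p_j=0$, and the tight price constraints give rows $p_i=1$; but eliminating $y$ from the $2n$ node-balance equations on the forest $G_y$ produces additional constraints on $p$ that you omit. Concretely, if $G_y$ has $c$ connected components, then solving for $y$ (rank of the forest incidence matrix is $m=2n-c$) leaves $c-1$ independent compatibility conditions of the form $\sum_{i\in K_s}p_i=\sum_{j\in K_g}p_j$, one per component $K$ (with spender-copies $K_s$ and good-copies $K_g$), modulo the trivial global one. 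These rows generally have more than two nonzero entries and coefficients $\pm1$, so they do not fit the pattern ``$p_i=1$ or $u_{ij}p_{j'}-u_{ij'}p_j=0$''. Without them the rows you list need not have rank $n$ (so the parenthetical ``nonsingular because the original subsystem is'' is not justified for your $M$), and Cramer's rule cannot be applied as stated. The omission is repairable: the extra rows have entries in $\{0,\pm1\}$, so including them still gives row norms $\le\sqrt{2n}$ and the Hadamard bound $\prod_i\|\mathrm{row}_i\|\le(\sqrt{2n}\,U)^n\le\Delta$; and the right-hand sides are still in $\{0,1\}$. But as written, the derivation of the specific form of $M$ — which is the crux of your argument — is incorrect, and the proof does not go through without this fix.
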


\begin{proof}
From Claim \ref{claim:extremal}, an optimal solution to (\ref{CP}) is achieved at an extremal point, say $z^*$, of the associated polyhedron. 
Let $m$ denote the number of non-zero $y_{ij}$'s at $z^*$. 
%
We claim that $m\le 2n-1$.
Indeed, consider the bipartite graph $(A,A,E')$, where $E'=\{(i,j)\ |\
y_{ij}>0\}$, $|E'|=m$. If this graph contains a cycle, then the
$y_{ij}$'s can be modified such that every binding
constraint remains binding and we get one more pair $(i,j)$ with
$y_{ij}=0$, in a contradiction with $v$ being a vertex. 

Let $Cz=b$ denote a
 subset of binding constraints for $z^*$ in the
linear system defining the feasible region of 
(\ref{CP}), after removing the columns corresponding to the $y_{ij}=0$
variables. The number of columns is $m+2n\le 4n-1$.
Note that the $2n$ equalities corresponding to the nodes are linearly
dependent, and therefore the rank of the matrix $C$ is at most $m+2n-1$.

By Cramer's rule, every $y_{ij}$, $p_j$ and $\beta_i$ is quotient of two
integers bounded by the maximum sub-determinant of $(C,b)$. 
Using Hadamard's bound, this is at most the product of the largest
$(m+2n-1)$ column norms of $(C,b)$. 
Note that $||b||\le \sqrt{n}<\sqrt{n+3}$, as the only constraints containing
nonzero constants are the $p_i\ge 1$ inequalities.
The norm  of each of $m$ the columns corresponding to the  $y_{ij}$
variables is $\sqrt{2}$ as each $y_{ij}$ is contained in two constraints with
coefficient 1. Similarly, the norm of each of the $n$ columns
corresponding to the 
$p_i$'s is at most $\sqrt{n+3}$, and the norm of each of $n$ columns
corresponding to the $\beta_i$'s is at most $\sqrt{n}U$. We need the
largest $m+2n-1$ columns and therefore may remove one of those of  norm
2.
From this, we 
can conclude that every $p_j$ and $y_{ij}$ is quotient of two integers bounded by $\Delta$.
Since the allocation $x_{ij}=y_{ij}/p_j$, we get that every $x_{ij}$ is quotient 
of two integers bounded by $\Delta^2$.
\end{proof} 

\begin{remark}
The above bound can be further strengthened to $\Delta=n!U^n$. 
\end{remark}

\section{Relation to previous work}\label{sec:general}
\subsection{Existence results}
The Arrow-Debreu market is traditionally formulated in a more general
setting. Besides the set of agents $A$, there is a set of goods $G$,
and each agent arrives to the market with an initial endowment
$w_{ig}\ge 0$ of good $g$. A market is given as ${\cal M}=(A,G,u,w)$.
Our setting corresponds to the special case
when $G=A$, and $w_{ij}=1$ if $i=j$ and 0 otherwise. We shall refer to
our special case as {\em bijective markets}.

Again, a market equilibrium consists of  prices $p:G\rightarrow
\R_{>0}$ and allocations of goods $x_{ijg}: A\times A\times
G\rightarrow \R_+$, where $x_{ijg}$ represents the amount of good $g$
sold by agent $j$ to agent $i$ such that:
\begin{itemize}
\item $\sum_{i\in A} x_{ijg}=w_{jg}$, $\forall j\in A,g\in G$, i.e., every
  good of every agent is fully sold.
\item For every $i\in A$, whenever $x_{ijg}>0$ for some $g\in G$ and
  $j\in A$, then $u_{ig}/p_g$ is the
  maximal value over $g\in G$.
\item $\sum_{j\in A,g\in G}x_{ijg}p_g=\sum_{g\in G}w_{ig}p_g$,
  $\forall i\in A$,
  that is, the money spent by agent $i$ equals his income.
\item $p_i>0$ for every $i\in A$.
\end{itemize}

The general case can be easily reduced to bijective markets (see
e.g. Jain \cite{jain}). First if a good is included in the initial
endowment of multiple agents, we give a different name for each such
occurrence. If an agent has $k$ goods in the endowment, we split the
agent into $k$ copies with the same utility function, each owning one
of the goods.

%

Consider now a market in the general form ${\cal M}=(A,G,u,w)$. 
 We say that a subset $S$ of agents is {\em
self-sufficient} whenever $u_{ig}>0$, for some $i\in S$ implies that $w_{i'g}=0, \forall
i'\in{A\setminus{S}}$. That is, agents in $S$ are not interested in the goods
owned by agents not in $S$.  We say that a market is {\em irreducible}
if there exists no self-sufficient proper subset of the agents. The following sufficient
condition was given by Gale in 1957:
\begin{theorem}[\cite{gale57}]\label{thm:gale-weak}
If the market ${\cal M}=(A,G,u,w)$ is irreducible then there exists an equilibrium.
\end{theorem}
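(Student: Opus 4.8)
The plan is to deduce Theorem~\ref{thm:gale-weak} from Theorem~\ref{thm:main} using the reduction of general markets to bijective markets sketched above. Normalise $\mathcal{M}=(A,G,u,w)$ so that every good has positive total endowment and $u$ has no zero row or column; a good wanted by no agent would be forced to have price $0$ and so preclude any equilibrium, so this costs nothing. Let $\mathcal{M}'=(A',E',u')$ be the bijective reduction of $\mathcal{M}$ (shared goods renamed, each agent split into one copy per owned good, utilities rescaled so every good-copy is one unit). The first, routine step is that an equilibrium of $\mathcal{M}'$ induces one of $\mathcal{M}$: any two copies $g^{(i)},g^{(i')}$ of the same original good are valued identically by every agent-copy, so by market clearing and the optimal-bundle condition they carry the same per-unit price; taking $p_g$ to be this value, aggregating allocations over copies, and using that the budget of agent $i$ equals the sum of the budgets of its copies, one checks directly that the resulting $(p,x)$ is an equilibrium of $\mathcal{M}$.

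The substance is the implication: if $\mathcal{M}$ is irreducible, then $\mathcal{M}'$ satisfies~(\ref{cond:suff}); granting this, Theorem~\ref{thm:main} yields an equilibrium of $\mathcal{M}'$, and we are done by the previous step. I would prove the contrapositive, working with the dictionary that in the bijective market $\mathcal{M}'$ — where copy $i$ owns only good-copy $i$ — a set $S'\subseteq A'$ is self-sufficient exactly when it is \emph{closed}, i.e.\ has no outgoing arc in $(A',E')$. Assume (\ref{cond:suff}) fails for $\mathcal{M}'$: some node $v$ is a singleton strongly connected component with no loop, i.e.\ $v$ lies on no directed cycle. Put $W:=\{z\in A': z\text{ cannot reach }v\}$. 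Then $W$ is nonempty — $v$ has an out-neighbour by the standing assumption, necessarily distinct from $v$ (no loop) and unable to reach $v$ (else a cycle through $v$), hence in $W$ — it omits $v$, and it is closed (an arc from $W$ to its complement would yield a directed path from $W$ to $v$). So $W$ is a proper nonempty closed set, and $\mathcal{M}'$ is reducible.

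It remains to transfer reducibility of $\mathcal{M}'$ back to $\mathcal{M}$. Given a proper nonempty closed $S'\subseteq A'$, set $S:=\{i\in A:\text{ some copy of }i\text{ lies in }S'\}$. Since all copies of an agent share its utility function and $S'$ is closed, $S$ is self-sufficient: if $i\in S$ and $u_{ig}>0$, a copy of $i$ in $S'$ has an arc to every copy of $g$, forcing each owner of $g$ to have a copy in $S'$, i.e.\ to lie in $S$. Clearly $S\neq\emptyset$. If $S=A$, choose an agent-copy $v\notin S'$, a copy of some agent $i_0$ owning the copy $h$ of original good $g_0$; any $k$ with $u_{kg_0}>0$ has a copy in $S'$ (as $k\in S=A$), and that copy has an arc to $v$, the owner of $h$, contradicting closedness — hence $g_0$ is wanted by nobody, contradicting the no-zero-column normalisation. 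Thus $S\subsetneq A$, so $\mathcal{M}$ is reducible, completing the contrapositive and the proof. Given Theorem~\ref{thm:main}, no individual step is difficult; the one place that needs care is the reduction bookkeeping — preserving the self-sufficient/closed-set correspondence across the splitting of agents into copies — which is exactly where the no-zero-column normalisation is used.
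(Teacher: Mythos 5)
The paper never proves this statement---it is quoted as Gale's 1957 theorem and cited, with Cornet's program (\ref{CPC}) subsequently offered as an independent source of the same conclusion---so there is no proof of record to match yours against. Your derivation from Theorem~\ref{thm:main} is correct, and it is essentially the one the paper's machinery implicitly invites: reduce to a bijective market, show that irreducibility of $\mathcal{M}$ forces (\ref{cond:suff}) for $\mathcal{M}'$, and invoke Theorem~\ref{thm:main}. A slightly more economical packaging would observe that your step~(b) is exactly ``$\mathcal{M}'$ reducible $\Rightarrow$ $\mathcal{M}$ reducible,'' and then appeal to the first part of Lemma~\ref{lem:self-suff} (bijective irreducibility $\Leftrightarrow$ strong connectedness of $(A',E')$), which immediately gives (\ref{cond:suff}); your step~(a) is a re-proof of the easy direction of that lemma. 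What your route buys is independence from Lemma~\ref{lem:self-suff} and an explicit handling of the reduction bookkeeping, which the paper only gestures at.

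Two cautions, neither fatal. First, the sentence ``a good wanted by no agent would be forced to have price $0$ and so preclude any equilibrium, so this costs nothing'' is not quite a WLOG: irreducibility does \emph{not} by itself exclude a good with zero column in $u$ (add to any irreducible market a new good owned by agent~$1$ and wanted by nobody---the market stays irreducible but has no equilibrium in the paper's sense). So either the general model carries the standing assumption, analogous to the in-/out-arc assumption the paper makes for bijective markets, that every good is desired by someone, or Gale's notion of equilibrium permits zero prices for unsold goods; you should say which, rather than claim the normalisation is free. Second, after rescaling each good-copy to one unit, the two copies $g^{(i)},g^{(i')}$ are \emph{not} literally valued identically by an agent-copy (the utilities carry factors $w_{ig}$ and $w_{i'g}$), but the corresponding bang-per-buck comparison still forces equal per-original-unit prices, so the conclusion you draw stands; just phrase it in terms of bang-per-buck rather than raw utility.
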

The above condition is sufficient but not necessary.
Later, in 1976
Gale \cite{gale} gave a strengthening of the above theorem.
We say that a subset $S$ of agents is {\em super self-sufficient} if in
addition to above, $\exists i\in S$ such that $w_{ig}>0$ and
$u_{i'g}=0, \forall i'\in S$. That is, an agent in $S$ owns a good for which
no agent in $S$ is interested.

\begin{theorem}\label{thm:gale}[Existence Theorem \cite{gale}]
There exists an equilibrium in the market ${\cal M}=(A,G,u,w)$ if and
only if no subset of $A$ is super self-sufficient. 
\end{theorem}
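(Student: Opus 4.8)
The plan is to deduce Theorem~\ref{thm:gale} from Theorem~\ref{thm:main} via the reduction to bijective markets, so that the content reduces to translating the purely combinatorial condition (\ref{cond:suff}) into the language of super self-sufficient sets. I would proceed in three steps: (a) make the reduction from ${\cal M}=(A,G,u,w)$ to its bijective version ${\cal M}'$ precise and observe that equilibria correspond; (b) prove a combinatorial lemma: the graph $(A',E')$ underlying ${\cal M}'$ satisfies (\ref{cond:suff}) if and only if ${\cal M}$ has no super self-sufficient subset; (c) combine with Theorem~\ref{thm:main}, which gives that ${\cal M}'$ admits an equilibrium iff (\ref{CP}) for ${\cal M}'$ is feasible iff (\ref{cond:suff}) holds for $(A',E')$.

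For step (a): after renaming every occurrence of a shared good and splitting each agent into one copy per owned good, ${\cal M}'$ has vertex set $A'=\{(i,g):w_{ig}>0\}$, where $(i,g)$ is simultaneously the copy of agent $i$ that owns (its private renamed version of) good $g$; thus $E'$ contains the arc $(i,g)\to(i',g')$ precisely when $u_{ig'}>0$, and there is a loop at $(i,g)$ precisely when $u_{ig}>0$. The correspondence of equilibria is routine (cf.\ \cite{jain}): an equilibrium price vector on $G$ lifts by giving all renamed copies of a good the common original price, each copy $(i,g)$ then has budget $w_{ig}p_g$ and the same best bang-per-buck set as agent $i$, so distributing agent $i$'s purchases among its copies yields an equilibrium of ${\cal M}'$, and conversely aggregating over copies turns an equilibrium of ${\cal M}'$ into one of ${\cal M}$; market equilibrium being scale-invariant absorbs the $p_i\ge1$ normalization of (\ref{CP}).

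For step (b) the key device is a dictionary between subsets of $A$ and vertex sets of $(A',E')$: for $S\subseteq A$ put $S'=\{(i,g)\in A':i\in S\}$; then $S$ is self-sufficient iff no arc of $E'$ leaves $S'$ (an arc $(i,g)\to(i',g')$ with $i\in S\not\ni i'$ is precisely a witness that some agent of $S$ desires a good still owned outside $S$), and $S$ is super self-sufficient iff moreover there is a vertex $(i_0,g_0)\in S'$ that receives no arc of $E'$ from inside $S'$ --- which is exactly the condition ``$u_{i'g_0}=0$ for all $i'\in S$'', and which applied with $i'=i_0$ also forbids the loop at $(i_0,g_0)$. Granting this, both directions are short. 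If $S$ is super self-sufficient then, $S'$ being out-closed, every directed cycle meeting $S'$ stays inside $S'$, so a cycle through $(i_0,g_0)$ would enter it from within $S'$, which is impossible; hence $(i_0,g_0)$ lies on no cycle and has no loop, and (\ref{cond:suff}) fails. Conversely, if (\ref{cond:suff}) fails at a loopless vertex $v=(i,g)$ lying on no cycle, let $R'$ be the set of vertices reachable from $v$ and $S=\{i':(i',g')\in R'\text{ for some }g'\}$; then $R'$ (hence $S'$) is out-closed, so $S$ is self-sufficient, and no $i'\in S$ can have $u_{i'g}>0$, for otherwise an arc from a vertex $(i',g')\in R'$ to $v$ (available since $w_{ig}>0$) would put $v$ on a cycle, or give a loop at $v$ if $(i',g')=v$; thus $S$ contains $i$ with $w_{ig}>0$ and is super self-sufficient.

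Combining (a)--(c) proves the theorem, and in particular recovers the necessity direction (that a super self-sufficient set precludes equilibrium) for free. I expect step (b) to be the main obstacle: not conceptually deep, but delicate in bookkeeping --- one must keep the good-renaming straight so that a loop at $(i,g)$ means exactly that agent $i$ is interested in the good it owns, and must verify carefully that the ``super'' witness $(i_0,g_0)$ has no incoming arc from inside $S'$ (as opposed to merely no incoming arc at all). The only graph-theoretic fact used is that a vertex of an out-closed set that has no loop and no incoming arc from within the set cannot lie on a directed cycle. Finally, it is worth checking the degenerate outcomes of the reduction --- a copy with no outgoing arc (agent interested in nothing) or a copy with no incoming arc (agent owns a good nobody wants) --- since in these cases the corresponding singleton, respectively the set $A$ itself, is super self-sufficient, consistent with the equivalence.
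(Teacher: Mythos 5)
The paper does not actually prove Theorem~\ref{thm:gale}: it is cited from Gale. What the paper does provide is Theorem~\ref{thm:main} together with Lemma~\ref{lem:self-suff}, which restates the super-self-sufficiency condition as (\ref{cond:suff}), but only for \emph{bijective} markets. Your proposal assembles these ingredients into a full re-derivation, which is in the spirit the paper intends but is not spelled out there. The genuinely new content in your write-up is step (b): you prove directly that the \emph{original} market ${\cal M}$ has a super self-sufficient subset if and only if the bijective graph $(A',E')$ violates (\ref{cond:suff}). The paper's Lemma~\ref{lem:self-suff} stops short of this --- it relates (\ref{cond:suff}) to super-self-sufficiency only within a bijective market, so to deduce Theorem~\ref{thm:gale} one still needs to check that super-self-sufficiency of ${\cal M}$ transfers correctly through the splitting/renaming; your step (b) is exactly that verification. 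The witness construction (the reachability set of the offending vertex) matches the one in the paper's proof of Lemma~\ref{lem:self-suff} and in the paper's remark that (\ref{cond:suff}) is necessary, so beyond making the reduction explicit the approach is the same.

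Two small points to tighten. First, ``$R'$ (hence $S'$) is out-closed'' elides a step: it relies on the fact that in $(A',E')$ the out-neighborhood of a vertex $(i,g)$ depends only on the agent coordinate $i$ (an arc $(i,g)\to(i',g')$ exists iff $u_{ig'}>0$, independent of $g$), so once one copy of $i$ sits inside the out-closed set $R'$, every copy of $i$ has all out-neighbors in $S'$. Without that observation the deduction is not immediate; alternatively you can bypass $S'$ and argue self-sufficiency of $S$ directly from $R'$ being out-closed. Second, your dictionary equates ``$(i_0,g_0)$ receives no arc from inside $S'$'' with ``$u_{i'g_0}=0$ for all $i'\in S$''; these coincide only if every agent of $S$ owns at least one good, since an agent in $S$ with empty endowment contributes no vertex to $S'$ yet still figures in Gale's quantifier. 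This is harmless under the standing assumption (built into the bijective setting, and implicit in the reduction) that every agent has a nonempty endowment, but it is worth flagging if ${\cal M}$ is not assumed to satisfy it.
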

We show that in our special case of bijective markets (i.e. $G=A$, and
$w_{ij}=1$ if $i=j$ and 0 otherwise), the existence condition in
Theorem~\ref{thm:main} is equivalent to that in
Theorem~\ref{thm:gale}.

\begin{lemma}\label{lem:self-suff}
A bijective market is irreducible if and only if the directed graph $(A,E)$ is
strongly connected. Further, (\ref{cond:suff}) holds if and only if no subset of $A$
is super self-sufficient.
\end{lemma}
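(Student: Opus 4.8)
The plan is to translate both conditions into statements purely about the digraph $(A,E)$, exploiting that in a bijective market the endowment matrix is the identity: good $g$ is owned only by agent $g$. Hence for $S\subseteq A$, the clause ``$w_{i'g}=0$ for all $i'\in A\setminus S$'' is equivalent to $g\in S$, so that $S$ is self-sufficient if and only if no arc of $E$ leaves $S$, i.e.\ $S$ is closed under out-neighbours. Likewise, the extra clause in the definition of super self-sufficiency — $\exists i\in S$ with $w_{ig}>0$ and $u_{i'g}=0$ for all $i'\in S$ — unwinds via $g=i$ to: some $i\in S$ has no incoming arc from inside $S$ (in particular no loop at $i$). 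Thus ``$S$ is super self-sufficient'' becomes ``$S$ is closed under out-neighbours and some $i\in S$ has no in-neighbour in $S$''.

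For the first assertion I would invoke the standard fact that a digraph is strongly connected if and only if every nonempty proper vertex subset has an outgoing arc: if $u$ cannot reach $v$, the set of vertices reachable from $u$ is nonempty, proper, and has no outgoing arc, and conversely a set with no outgoing arc traps every walk begun inside it. Combined with the translation above, the market being irreducible (no nonempty proper self-sufficient subset) is literally the statement that every nonempty proper subset of $A$ has an outgoing arc, i.e.\ that $(A,E)$ is strongly connected.

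For the second assertion I would prove the contrapositive in both directions, namely that (\ref{cond:suff}) fails exactly when some subset of $A$ is super self-sufficient. Suppose first (\ref{cond:suff}) fails: there is a singleton strongly connected component $\{k\}$ with no loop at $k$. Let $T$ be the set of agents reachable from $k$ (including $k$ itself). Then $T$ is closed under out-neighbours by construction. Moreover $k$ has no in-neighbour inside $T$: an arc $j\to k$ with $j\in T\setminus\{k\}$ would lie on a cycle through $k$, forcing $j$ into the strongly connected component of $k$, a contradiction; and $j=k$ is excluded since there is no loop. Hence $T$ is super self-sufficient with witness $k$. Conversely, suppose $S$ is super self-sufficient with witness $i$. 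I claim the strongly connected component of $i$ is $\{i\}$: if some $v\neq i$ belonged to it, then taking the last arc $w\to i$ on a path from $v$ to $i$, the vertex $w$ is reachable from $i$, hence $w\in S$ as $S$ is closed under out-neighbours, contradicting that $i$ has no in-neighbour in $S$. Since the witness clause also forbids a loop at $i$, condition (\ref{cond:suff}) is violated.

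None of this is genuinely hard; the only points that need a little care are the translation of (super) self-sufficiency into closure under out-neighbours — which is exactly where the identity endowment matrix of a bijective market is used — and the two short reachability/strong-component arguments that manufacture a super self-sufficient set from a bad singleton component and, conversely, a bad singleton component from a super self-sufficient set. I would also keep the standing assumption (each agent incident to an incoming and an outgoing arc) in mind to dispose of degenerate one-agent instances.
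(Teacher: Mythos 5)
Your proof is correct and follows essentially the same approach as the paper: translate (super) self-sufficiency into closure properties of the digraph using the identity endowment, characterize strong connectivity via outgoing arcs from nonempty proper subsets, and for the second part pass back and forth between a bad singleton strongly connected component and a super self-sufficient set by reachability arguments. Incidentally, your phrasing that $S$ is self-sufficient iff no arc \emph{leaves} $S$ is the correct one under the convention $ij\in E \Leftrightarrow u_{ij}>0$; the paper's assertion that it is iff no arc \emph{enters} $S$ looks like a directional slip, and you also make explicit the short argument (only sketched in the paper) for why a super self-sufficient witness must form a loopless singleton strongly connected component.
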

\begin{proof}
The first part follows since in a bijective market a subset
$S\subseteq A$ of agents is self-sufficient
if and only if no arc enters $S$ in the directed graph $(A,E)$.
For the second part, 
assume first
that (\ref{cond:suff}) is violated for node $k$, and let $T$ denote the set of
nodes different from $k$ that can be reached on a directed path in $E$ from
$k$. Now let $S=T\cup\{k\}$.  It is easy to check that $S$ is super
self-sufficient, since $w_{kk}>0$ and $u_{ik}=0, \forall i\in S$. 

Conversely, assume there exists a super self-sufficient set $S$. According to
the condition, there exist $k \in S$, such that $w_{kk}>0$ and $u_{ik}=0,
\forall i \in S$. Clearly $k$ is a singleton component with no self-loop in
the strongly connected components of graph $(A,E)$, verifying
(\ref{cond:suff}).  
\end{proof}

\subsection{Previous convex programs}\label{sec:jain}
Let us first exhibit Cornet's convex program \cite{cornet89}. It was originally given for the 
general case of arbitrary endowments, but we present it here for bijective markets. Also, it was originally 
formulated with a max-min objective over the feasible region $\sum_{i} x_{ij}\le 1$ for all $j\in A$, $x\ge 0$; we unfold the max-min objective here in the natural way. The variable $x_{ij}$ corresponds to the amount of good $j$ purchased by agent $i$, whereas $q_i$ corresponds to the logarithm of the price of good $i$.

\begin{equation}\tag{CP-C}\label{CPC}
\begin{aligned}
\max &\ t\\
t&\le \sum_{k:ik\in E} u_{ik}x_{ik} - u_{ij}e^{q_i-q_j} \quad\forall ij\in E\\
 \sum_{j:ji\in E}x_{ji} &\le 1 \quad\forall i\in A\\
 x&\ge 0
\end{aligned}
\end{equation}

\begin{theorem}[\cite{cornet89}]\label{thm:cornet}
If (\ref{CPC}) is bounded then $t=0$, and $(t,x,q)$ is an optimal solution if and only if $(x,p)$ corresponds to a market 
equilibrium where $p_i=e^{q_i}$ for all $i\in A$. 
Further, if the market is irreducible then (\ref{CPC}) is bounded.
\end{theorem}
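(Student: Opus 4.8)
The plan is to deduce Theorem~\ref{thm:cornet} from two short facts about (\ref{CPC}) together with the existence results already available. Throughout put $p_i=e^{q_i}$, let $b_i=\sum_{k:ik\in E}u_{ik}x_{ik}$ be the utility of agent~$i$'s bundle, $e_i=\sum_{k:ik\in E}x_{ik}p_k$ her expenditure, and $\mathrm{mbb}_i=\max_{j:ij\in E}u_{ij}/p_j$ her maximum bang-per-buck at prices $p$ (well defined and positive, since each agent has an outgoing arc). \emph{Fact~1: every feasible solution satisfies $t\le 0$.} If $t>0$, pick for each $i$ a good $j_i$ attaining $\mathrm{mbb}_i$; the constraint of the arc $ij_i$ reads $b_i>u_{ij_i}p_i/p_{j_i}=\mathrm{mbb}_i\,p_i$. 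Writing $b_i/p_i=\sum_k (u_{ik}/p_k)(x_{ik}p_k/p_i)$ as a nonnegative combination of the numbers $u_{ik}/p_k\le \mathrm{mbb}_i$ with total weight $e_i/p_i$, the inequality $b_i/p_i>\mathrm{mbb}_i$ forces $e_i/p_i>1$, i.e. $e_i>p_i$ for all $i$. Summing and using that every good is sold in total at most one unit, $\sum_i e_i=\sum_k p_k\sum_i x_{ik}\le \sum_k p_k=\sum_i p_i$, a contradiction. In particular (\ref{CPC}) is always bounded above. (Running the same computation with $t=0$ yields the weak inequalities $e_i\ge p_i$, hence equality throughout: $\sum_i x_{ij}=1$ for every $j$, $e_i=p_i$ for every $i$, and $b_i=\mathrm{mbb}_i\,e_i$; the last identity forces $x_{ij}>0\Rightarrow u_{ij}/p_j=\mathrm{mbb}_i$.)

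\emph{Fact~2: a feasible point of value $0$ exists if and only if a market equilibrium exists, and the points of value $0$ are precisely the equilibria via $p=e^q$.} If $(x,p)$ is an equilibrium, set $q=\log p$ and $t=0$: then $b_i=\mathrm{mbb}_i\,p_i$ and $u_{ij}/p_j\le \mathrm{mbb}_i$ give $0\le b_i-u_{ij}p_i/p_j$ for every $ij\in E$, market clearing gives $\sum_i x_{ij}=1\le 1$, and $x\ge 0$; so $(0,x,q)$ is feasible, hence optimal by Fact~1. Conversely, if $(0,x,q)$ is feasible, the parenthetical computation in Fact~1 shows that $(x,p)$ with $p=e^q$ clears the market, balances every budget ($e_i=p_i$), buys only maximum-bang-per-buck goods, and has $p>0$ — i.e. it is a market equilibrium. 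Combined with Fact~1 (no feasible point beats $0$), this is exactly the ``if and only if'' asserted in the theorem, for any optimal solution of value $0$.

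It remains to handle (i) ``irreducible $\Rightarrow$ bounded'' and (ii) ``bounded $\Rightarrow t=0$'' (the point of (ii) being that, since (\ref{CPC}) is always bounded above by Fact~1, the content is that its optimal value actually equals $0$). For (i): by Lemma~\ref{lem:self-suff} an irreducible bijective market is strongly connected, hence satisfies (\ref{cond:suff}); by Gale's Theorem~\ref{thm:gale-weak} (or by Theorem~\ref{thm:main} together with Lemma~\ref{lem:self-suff}) it has a market equilibrium, so by Fact~2 the optimum of (\ref{CPC}) is attained with value $0$ and (\ref{CPC}) is bounded. For (ii) one must rule out an optimal solution $(t^{*},x^{*},q^{*})$ with $t^{*}<0$; take such a candidate with $x^{*}$ also maximizing the inner minimum for the given $q^{*}$, and let $E^{*}\subseteq E$ be the set of tight arcs. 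If $E^{*}$ were acyclic, choose $d\colon A\to\R$ with $d_i<d_j$ for every $ij\in E^{*}$ (a topological order suffices) and replace $q^{*}$ by $q^{*}+\varepsilon d$: for small $\varepsilon>0$ this strictly increases $b^{*}_i-u_{ij}p_i/p_j$ on every formerly tight arc while keeping the slack arcs above $t^{*}$, contradicting optimality. Hence $E^{*}$ contains a directed cycle $C$. One then pushes a small amount of flow around $C$, raising $b^{*}_i$ for $i\in C$ and thereby relaxing the tight arcs on $C$; when such a push is obstructed by a saturated selling constraint, a self-contained-set argument — in the spirit of the analysis behind Lemma~\ref{lem:tau-0} — forces the goods involved to be fully sold and the corresponding budgets to be balanced, so that $(x^{*},p^{*})$ is already an equilibrium and $t^{*}=0$ by Fact~2.

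The computations in Facts~1 and~2 are short, and the irreducible case is an immediate appeal to Theorem~\ref{thm:gale-weak}; I expect the genuine difficulty to lie in the cycle-and-flow step of~(ii) above, namely extracting market clearing and budget balance at an optimum of (\ref{CPC}) purely from the cyclic structure of its tight constraints, without an a~priori existence theorem to fall back on.
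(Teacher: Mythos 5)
The paper does not prove Theorem~\ref{thm:cornet}: it is cited verbatim from Cornet \cite{cornet89}, with the remark that ``the proof uses a nontrivial argument on Lagrangian duality.'' So there is no paper proof to compare against line by line; what I can assess is the correctness and completeness of your attempt.

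Your Facts~1 and~2 are correct and pleasantly elementary. The computation $b_i\le\mathrm{mbb}_i\,e_i$ combined with $b_i\ge(\mathrm{mbb}_i+\text{slack})p_i$ and the counting bound $\sum_i e_i\le\sum_i p_i$ is exactly the right way to see that the objective never exceeds $0$, and the tightness analysis at $t=0$ cleanly recovers the equilibrium conditions in both directions. These two facts establish the middle clause of the theorem (value-$0$ feasible points are exactly the equilibria), and, given an equilibrium, they also yield boundedness. As a side remark: Fact~1 actually shows the supremum of (\ref{CPC}) is \emph{always} $\le 0$, so ``bounded'' in the theorem must be read as ``the supremum is attained'' --- otherwise the first clause would be vacuous; you implicitly use this reading but it would be worth saying so.

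The genuine gap is exactly where you flag it: ``attained $\Rightarrow t^{*}=0$.'' Your acyclic-tight-arc perturbation is sound, but the cyclic case is only a sketch, and it has two unaddressed problems. First, $E^{*}$ may have tight arcs not lying on the chosen cycle $C$; pushing flow around $C$ raises $b_i$ only for $i\in C$, so tight constraints with tails outside $C$ are not relaxed, and the perturbation does not obviously increase $t$. Second, and more fundamentally, the obstruction by a saturated selling constraint $\sum_{j'}x_{j'j}=1$ is the heart of the matter, and ``a self-contained-set argument in the spirit of Lemma~\ref{lem:tau-0} forces the goods to be fully sold and budgets to balance'' is not an argument but a wish. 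In Lemma~\ref{lem:tau-0} the leverage comes from the full set of KKT conditions of (\ref{CP}) (including the multipliers $w,\tau$ and the extremal structure of $S$ and $T$); you have no corresponding structure for (\ref{CPC}) because you deliberately avoided writing its Lagrangian. If you want an elementary combinatorial proof of this step you would have to reconstruct something playing the role of that KKT machinery, which is a substantial piece of work, not a one-line appeal.

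Two further remarks on the remaining clause. Your proof of ``irreducible $\Rightarrow$ bounded'' is logically valid, but it leans on Theorem~\ref{thm:gale-weak} (or Theorem~\ref{thm:main}) to conjure an equilibrium and then plugs it into Fact~2. That establishes the stated implication, but it makes Cornet's program useless as an \emph{independent} route to existence --- which is precisely the feature the paper emphasizes when contrasting (\ref{CP}) with (\ref{CPC}). Cornet's own argument must prove boundedness from irreducibility directly, via duality, without presupposing an equilibrium. Finally, you take a genuinely different route from Cornet's (elementary inequalities plus a perturbation/flow argument, versus Lagrangian duality), and the elementary half you did carry out (Facts~1, 2) is a clean and arguably more transparent way to see the ``value $0$ iff equilibrium'' part; the duality half is where the actual difficulty lives, and that half is still missing.
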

The proof  uses a nontrivial argument on
Lagrangian duality. Note that the existence of equilibrium follows on
under Gale's sufficient condition from 1957
(Theorem~\ref{thm:gale-weak}), as opposed to (\ref{CP}), where it
follows under the necessary and sufficient condition in Theorem~\ref{thm:gale}.

According to  Theorem~\ref{thm:cornet} and Lemma~\ref{lem:self-suff},
if the market is irreducible then $t=0$, and $\sum_{j:ji\in E}x_{ji}= 1$ must hold for every $i\in A$. By taking logarithms  we get that the following convex program has a feasible solution:
\begin{equation}\tag{CP-J}\label{CPJ}
\begin{aligned}
q_i-q_j&\le \log\left(\sum_{k:ik\in E} u_{ik}x_{ik}\right)
-\log u_{ij}\quad\forall ij\in E\\
 \sum_{j:ji\in E}x_{ji} &= 1 \quad\forall i\in A\\
x&\ge 0
\end{aligned}
\end{equation}
This is precisely the 
 convex program by Nenakov and Primak \cite{nenakov83}, and 
by Jain \cite{jain}.

\medskip

We can write the Lagrangian dual of our program (\ref{CP}), see Boyd
and Vandenberghe \cite{boyd04}. This gives
\begin{equation}\tag{CP-D}\label{CPD}
\begin{aligned}
\max &\ \sum_{i\in A}{\tau_i}\\
\delta_i-\delta_j+\tau_i \le 1 - &\sum_{k:ki\in E} w_{ki} + \log \left(\sum_{k:ik\in
  E}u_{ik} w_{ik}\right) - \log{u_{ij}} \quad \forall ij\in E \\
 \tau,w&\ge 0
\end{aligned}
\end{equation}
Note that the variables in an optimal solution correspond to optimal
Lagrangian multipliers satisfying the KKT-conditions (\ref{kkt-1})-(\ref{kkt-3}).
Theorem~\ref{thm:main} implies that strong duality holds: if
(\ref{CP}) is feasible then there exists a market equilibrium, that
easily provides a solution (\ref{CPD}).

Despite certain similarities, this formulation appears to be different
from (\ref{CPC}), namely, it has a larger feasible region.
Indeed, for every feasible solution of (\ref{CPC}), $\delta=q$, $w=x$, $\tau_i=t$
gives a feasible solution to (\ref{CPD}). Nevertheless, the converse
is not true since $\sum_{i:ij\in E}w_{ij}\le 1$
may not hold for feasible solutions of (\ref{CPD}).

We further note that following the argument of Section~\ref{sec:main-proof},
we can derive the feasibility of (\ref{CPC}). 
It follows that in an optimal solution we must have $\sum_{j:ji\in
  E}w_{ji} = 1$ and  $\tau_i = 0$ for all $i \in A$.
Using these, we can substitute $x=w$, $q=\delta$. 
This yields a
feasible solution to (\ref{CPC}).

\bibliographystyle{abbrv}
\bibliography{arrow-debreu}

\begin{thebibliography}{10}

\bibitem{arrow}
K.~J. Arrow and G.~Debreu.
\newblock Existence of an equilibrium for a competitive economy.
\newblock {\em Econometrica: Journal of the Econometric Society}, pages
  265--290, 1954.

\bibitem{BDX}
B.~E. Birnbaum, N.~R. Devanur, and L.~Xiao.
\newblock Distributed algorithms via gradient descent for fisher markets.
\newblock In {\em ACM Conference on Electronic Commerce}, pages 127--136, 2011.

\bibitem{boyd04}
S.~P. Boyd and L.~Vandenberghe.
\newblock {\em Convex optimization}.
\newblock Cambridge university press, 2004.

\bibitem{brainard}
W.~C. Brainard and H.~E. Scarf.
\newblock How to compute equilibrium prices in 1891, 2000.
\newblock Cowles Foundation Discussion Paper 1272.

\bibitem{codenotti04}
B.~Codenotti, S.~Pemmaraju, and K.~Varadarajan.
\newblock The computation of market equilibria.
\newblock {\em ACM SIGACT News}, 35(4):23--37, 2004.

\bibitem{cornet89}
B.~Cornet.
\newblock Linear exchange economies.
\newblock Technical report, Cahier Eco-Math, Universit{\'e} de Paris, 1989.

\bibitem{devanur-convex}
N.~R. Devanur.
\newblock Fisher markets and convex programs.
\newblock Unpublished manuscript, 2009.

\bibitem{devanur08}
N.~R. Devanur, C.~H. Papadimitriou, A.~Saberi, and V.~V. Vazirani.
\newblock Market equilibrium via a primal--dual algorithm for a convex program.
\newblock {\em Journal of the ACM (JACM)}, 55(5):22, 2008.

\bibitem{duan13}
R.~Duan and K.~Mehlhorn.
\newblock A combinatorial polynomial algorithm for the linear {Arrow-Debreu}
  market.
\newblock In {\em Proceedings of ICALP}, 2013.

\bibitem{eaves}
B.~C. Eaves.
\newblock A finite algorithm for the linear exchange model.
\newblock {\em Journal of Mathematical Economics}, 3(2):197--203, 1976.

\bibitem{eisenberg59}
E.~Eisenberg and D.~Gale.
\newblock Consensus of subjective probabilities: {The} pari-mutuel method.
\newblock {\em The Annals of Mathematical Statistics}, 30(1):165--168, 1959.

\bibitem{florig04}
M.~Florig.
\newblock Equilibrium correspondence of linear exchange economies.
\newblock {\em Journal of optimization theory and applications},
  120(1):97--109, 2004.

\bibitem{gale57}
D.~Gale.
\newblock Price equilibrium for linear models of exchange.
\newblock {\em Rand report P-1156}, 1957.

\bibitem{gale}
D.~Gale.
\newblock The linear exchange model.
\newblock {\em Journal of Mathematical Economics}, 3(2):205--209, 1976.

\bibitem{jain}
K.~Jain.
\newblock A polynomial time algorithm for computing an {A}rrow-{D}ebreu market
  equilibrium for linear utilities.
\newblock {\em SIAM Journal on Computing}, 37(1):303--318, 2007.

\bibitem{mertens03}
J.-F. Mertens.
\newblock The limit-price mechanism.
\newblock {\em Journal of Mathematical Economics}, 39(5):433--528, 2003.

\bibitem{nenakov83}
E.~Nenakov and M.~Primak.
\newblock One algorithm for finding solutions of the {Arrow-Debreu} model.
\newblock {\em Kibernetica}, 3:127--128, 1983.

\bibitem{nisan07}
N.~Nisan, T.~Roughgarden, E.~Tardos, and V.~Vazirani.
\newblock {\em Algorithmic game theory}.
\newblock Cambridge University Press, 2007.

\bibitem{orlin}
J.~B. Orlin.
\newblock Improved algorithms for computing {Fisher}'s market clearing prices.
\newblock In {\em Proceedings of STOC}, pages 291--300. ACM, 2010.

\bibitem{Shmyrev09}
V.~I. Shmyrev.
\newblock An algorithm for finding equilibrium in the linear exchange model
  with fixed budgets.
\newblock {\em Journal of Applied and Industrial Mathematics}, 3(4):505--518,
  2009.

\bibitem{vazirani12}
V.~V. Vazirani.
\newblock The notion of a rational convex program, and an algorithm for the
  {Arrow-Debreu Nash bargaining game}.
\newblock {\em Journal of the ACM (JACM)}, 59(2):7, 2012.

\bibitem{vegh12a}
L.~A. V\'egh.
\newblock Concave generalized flows with applications to market equilibria.
\newblock In {\em Proceedings of FOCS}, pages 150--159. IEEE, 2012.

\bibitem{vegh12b}
L.~A. V{\'e}gh.
\newblock Strongly polynomial algorithm for a class of minimum-cost flow
  problems with separable convex objectives.
\newblock In {\em Proceedings of STOC}, pages 27--40. ACM, 2012.

\bibitem{walras}
L.~Walras.
\newblock El{\'e}ments d'{\'e}conomie politique pure, ou th{\'e}orie de la
  richesse sociale (in {French}), 1874.
\newblock English translation: {Elements of pure economics; or, the theory of
  social wealth}. {American Economic Association and the Royal Economic
  Society}, 1954.

\bibitem{Ye08}
Y.~Ye.
\newblock A path to the {Arrow--Debreu} competitive market equilibrium.
\newblock {\em Mathematical Programming}, 111(1-2):315--348, 2008.

\end{thebibliography}

\end{document}